
\documentclass[journal,12pt,onecolumn,draftclsnofoot,]{IEEEtran}

\usepackage[latin9]{inputenc}
\usepackage{textcomp}
\usepackage{amsmath}
\usepackage{amssymb}
\usepackage{tabularx,ragged2e,booktabs}
\usepackage{epsfig,rotating,setspace,latexsym,amsmath,epsf,amssymb,bm}
\usepackage{cite,graphicx,color,subfigure, amsthm}
\usepackage{subfigure}
\usepackage{multicol}
\usepackage{thmtools}
\usepackage{etoolbox}
\usepackage{enumitem}
\usepackage[autostyle]{csquotes}

\newtheorem{prop}{Proposition}

\newtheorem{lemma}{Lemma}

\usepackage{flushend}
\usepackage{lipsum}

\usepackage[english]{babel}

\ifCLASSINFOpdf

\else

\fi

\hyphenation{op-tical net-works semi-conduc-tor}

\begin{document}

\title{Online Edge Caching and Wireless Delivery in Fog-Aided Networks with Dynamic Content Popularity}

\author{Seyyed Mohammadreza~Azimi, 
        Osvaldo~Simeone,~
         Avik~Sengupta
        ~and ~Ravi~Tandon
\thanks{S. M. Azimi is with the CWiP, Department
of Electrical and Computer Engineering, New Jersey Institute of Technology, Newark,
NJ, USA. E-mail: (sa677@njit.edu). O. Simeone is with Centre for Telecommunications Research
Department of Informatics, King's College London, UK. Email: (osvaldo.simeone@kcl.ac.uk) 
A. Sengupta is with Next Generation and Standards, iCDG, Intel Corporation, Santa Clara, USA. Email: (avik.sengupta@intel.com). R. Tandon is with University of Arizona, Tucson, AZ, USA. E-mail: (tandonr@email.arizona.edu). The work of S.M. Azimi and O. Simeone was partially supported by the U.S. NSF through grant CCF-1525629. }%
}

\markboth{Draft}%
{Shell \MakeLowercase{\textit{et al.}}: Bare Demo of IEEEtran.cls for IEEE Journals}

\maketitle

\begin{abstract}
Fog Radio Access Network (F-RAN) architectures can leverage both cloud processing and edge caching for content delivery to the users. To this end, F-RAN utilizes caches at the edge nodes (ENs) and fronthaul links connecting a cloud processor to ENs. Assuming time-invariant content popularity, existing information-theoretic analyses of content delivery in F-RANs rely on offline caching with separate content placement and delivery phases. In contrast, this work focuses on the scenario in which the set of popular content is time-varying, hence necessitating the online replenishment of the ENs' caches along with the delivery of the requested files.
The analysis is centered on the characterization of the long-term Normalized Delivery Time (NDT), which captures the temporal dependence of the coding latencies accrued across multiple time slots in the high signal-to-noise ratio regime. Online edge caching and delivery schemes are investigated for both serial and pipelined transmission modes across fronthaul and edge segments. Analytical results demonstrate that, in the presence of a time-varying content popularity, the rate of fronthaul links sets a fundamental limit to the long-term NDT of F-RAN system. Analytical results  are further verified by numerical simulation, yielding important design insights.
\end{abstract}

\begin{IEEEkeywords}
Edge caching, Online caching, C-RAN, F-RAN, interference management, fog networking, $5$G.
\end{IEEEkeywords}

\IEEEpeerreviewmaketitle

\section{Introduction}
Delivery of wireless multimedia content poses one of the main challenges in the definition of enhanced mobile broadband services in $5$G (see, e.g., \cite{Wong}).
 In-network caching, including \textit{edge caching}, is a key technology for the deployment of information-centric networking with reduced bandwidth \cite{Ali5}, latency \cite{OS}, and energy \cite{Jaime}. Specifically, edge caching stores popular content at the edge nodes (ENs) of a wireless system, thereby reducing latency and backhaul usage when the requested contents are cached \cite{Shan}. 
 
While edge caching moves network intelligence closer to the end users, \textit{Cloud Radio Access Network} (C-RAN) leverages computing resources at a central \enquote{cloud} processing unit, and infrastructure communication resources in the form of fronthaul links connecting cloud to ENs \cite{Sim1, Quek}. The cloud processor is typically part of the transport network that extends to the ENs, and is also known as "edge cloud" \cite{Gpp}. The C-RAN architecture can be used for content delivery as long as the cloud has access to the content library \cite{OS, Avik2}.  

Through the use of Network Function Virtualization (NFV) \cite{Her}, 5G networks will enable network functions to be flexibly allocated between edge and cloud elements, hence breaking away from the purely edge- and cloud-based solutions provided by edge caching and C-RAN, respectively. To study the optimal operation of networks that allow for both edge and cloud processing, references \cite{OS, Avik2} investigated a \textit{Fog Radio Access Network (F-RAN)} architecture, in which the ENs are equipped with limited-capacity caches and fronthaul links. These works addressed the optimal use of the communication resources on the wireless channel and fronthaul links and storage resources at the ENs, under the assumption of offline caching. With offline caching, caches are replenished periodically, say every night, and the cached content is kept fixed for a relatively long period of time, e.g., throughout the day, during which the set of popular contents is also assumed to be invariant.

\emph{Related Work}: 
The information theroretic analysis of offline edge caching in the presence of a static set of popular contents was first considered in \cite{Ali2}. In this seminal work, an achievable number of degrees of freedom (DoF), or more precisely its inverse, is determined as a function of cache storage capacity for a system with three ENs and three users. In \cite{Tao, Navid, Salman, jad, Roig}, generalization to the scenario with cache-aided transmitters as well as receivers is considered. In particular, in \cite{Navid, Salman}, it is proved that, under the assumption of one-shot linear precoding, the maximum achievable sum-DoF of a wireless system with cache-aided transmitters and receivers scales linearly with the aggregate cache capacity across the nodes with both fully connected and partially connected topologies. In \cite{jad}, separation of network and physical layers is proposed, which is proved to be approximately optimal. Reference \cite{Roig} extended the works in \cite{Tao, Navid, Salman, jad} to include decentralized content placement at receivers. 

In contrast to abovementioned works, references \cite{Avik1, Avik2, Koh, jasper,Kakar}  investigated the full F-RAN scenario with both edge caching and cloud processing in the presence of offline caching. Specifically, reference \cite{Avik2} derives upper and lower bounds on a high-SNR coding latency metric defined as Normalized Delivery Time (NDT), which generalizes the inverse-of-DoF metric studied in \cite{Ali2} and in most of the works reviewed above. 
The minimum NDT is characterized within a multiplicative factor of two. While in \cite{Avik2} it is assumed that there are point-to-point fronthaul links with dedicated capacities between cloud and ENs, a wireless broadcast channel is considered between cloud and ENs in \cite{Koh}. The scenario with heterogeneous cache requirements is considered in \cite{jasper}.  
Offline caching for a small-cell system with limited capacity fronthaul connection between small-cell BS and cloud-processor and partial wireless connectivity is invistigated in \cite{Kakar} and \cite{swiss}  under different channel models.  
\begin{figure}\label{ref_off}
\centering
\includegraphics[width=.8\textwidth]{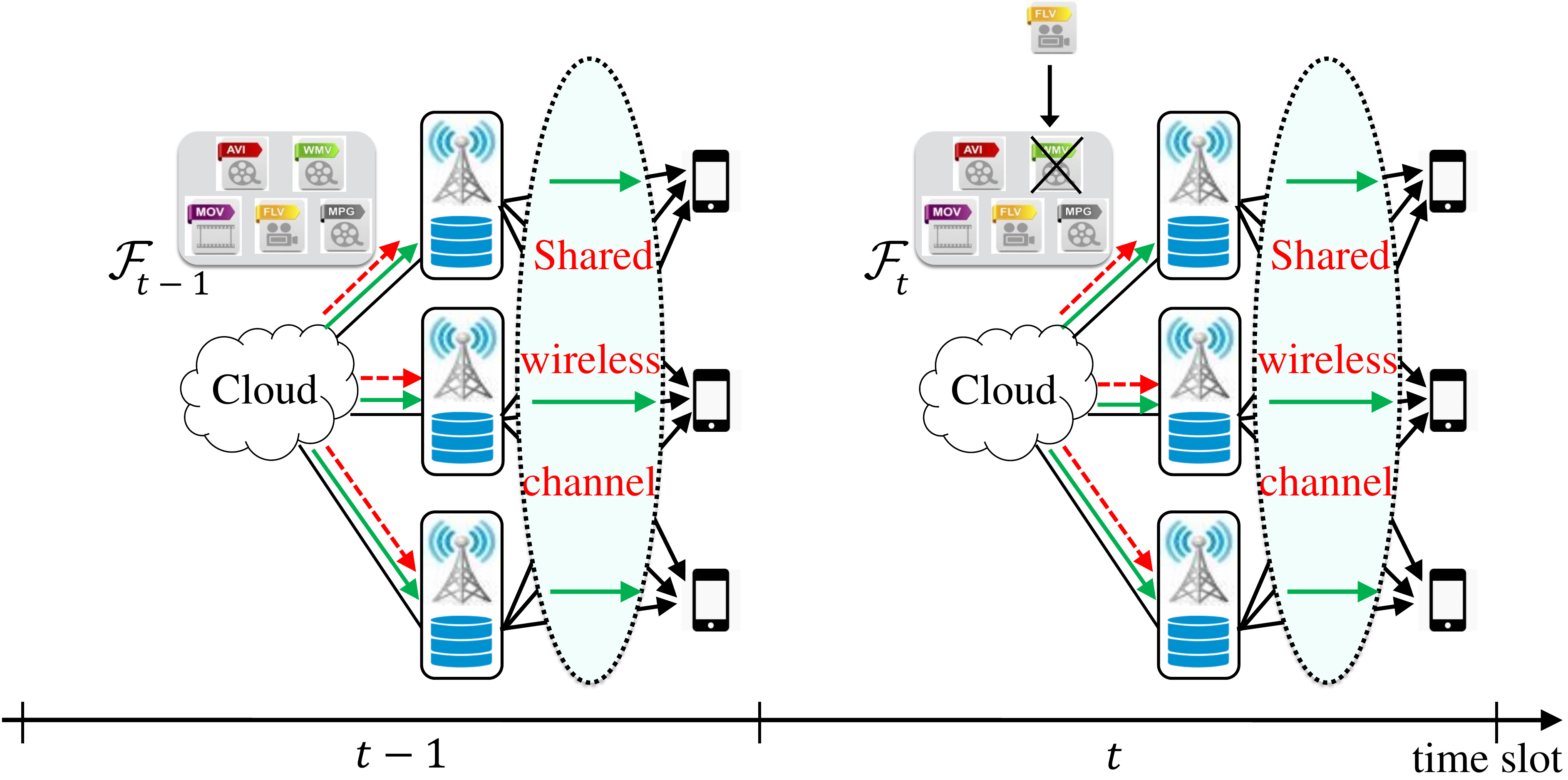}
\caption{With online edge caching, the set of popular files $\mathcal{F}_t$ is time-varying, and online cache update (red dashed arrows) generally can be done at each time slot along with content delivery (green arrows).}
\label{oncac}
\end{figure}

\emph{Main Contributions}: 
In this work, we consider an alternative \emph{online} caching set-up, typically considered in the networking literature \cite{Leo1}, in which the set of popular files is time-varying, making it generally necessary to perform cache replenishment as the system delivers contents to the end users. With the exception of conference version of this manuscript \cite{Azimi}, to the best of authors' knowledge, this is the first work to present an information theoretic analyses of online caching for edge caching and F-RANs. The main contributions are as follows.     

 \noindent $\bullet$ The performance metric of the long-term NDT, which captures the temporal dependence of the high-SNR coding latencies accrued in different slots, is introduced in Sec. \ref{proform};

 \noindent $\bullet$ Online edge caching and delivery schemes based on reactive online caching are proposed for a set-up with serial fronthaul-edge transmission, and bounds on the corresponding achievable long-term NDTs are derived in Sec. \ref{Fundamental limits} by considering both fixed and adaptive caching;

\noindent $\bullet$ Reactive and proactive online caching schemes are proposed for a pipelined fronthaul-edge transmission mode, and bounds on the corresponding achievable long-term NDTs are derived in Sec. \ref{sec_para};

 \noindent $\bullet$ The performance loss caused by the time-varying content popularity in terms of delivery latency is quantified by comparing the NDTs achievable under offline and online edge caching in Sec. \ref{Comp};

 \noindent $\bullet$ Numerical results are provided in Sec. \ref{Numer} that offer insights into the comparsion of  reactive online edge caching schemes with different eviction mechanisms, such as random, Least Recently Used (LRU) and First In First Out (FIFO) (see, e.g., \cite{Leo1}), proactive online edge caching schemes, under both serial and pipelined transmission.

The material in this paper was partly presented in the conference paper \cite{Azimi}. The additional contributions of this journal version include the analysis of reactive online caching with known popular set for serial fronthaul-edge transmission; the design of adaptive online edge caching as a means to optimize the performance of online caching; the study of pipelined fronthaul-edge transmission; and the inclusion of all proofs as well as new numerical results\footnote{We also correct an erroneous conclusion about proactive caching in \cite{Azimi} in Sec. \ref{sec_para}.}. 

\textit{Notation}: Given random variable $X$, the corresponding entropy is denoted by $\textrm{H}(X)$. The equality $f(x)=O(g(x))$ indicates the relationship $\lim_{x \rightarrow \infty} |f(x)/g(x)| < \infty$.

\section{System Model and Performance Criterion}\label{proform}
In this section, we first present the system model adopted for the analysis of F-RAN systems with online edge caching.
Then, we introduce the long-term NDT as the performance measure of interest. We focus here on the scenario with serial fronthaul-edge transmission and treat the pipelined case in Sec. \ref{sec_para}. As detailed below, with serial delivery, fronthaul transmission is followed by edge transmission, while pipelining enables simultaneous transmission on fronthaul and edge channels.
\subsection{System Model}\label{sysmo1}
We consider the $M \times K$ F-RAN with online edge caching shown in Fig. \ref{oncac}, in which $M$ ENs serve a number of users through a
shared downlink wireless channel. Time is organized into time slots, and $K$ users can request contents in each time slot. The requested contents need to be delivered within the given time slot, which may generally accommodate also other transmissions not explicitly modeled here. Each EN is connected to the cloud via a dedicated fronthaul link with capacity $C_F$ bits per symbol period, where the latter henceforth refers to the symbol period for the wireless channel. Each time slot contains a number of symbols. In each time slot $t$, $K$ users make requests from the current set $\mathcal{F}_t$ of $N$ popular files. All files in the popular set are assumed to have the same size of $L$ bits. As per standard modelling assumptions (see, e.g., \cite{Ali5}), each file can be interpreted as a chunk of a video, or of some other multimedia content, that needs to be delivered within a time slot. The cache capacity of each EN is $\mu NL$ bits, where $\mu$, with $0 \leq \mu \leq 1$, is defined as the \emph{fractional cache capacity} since $NL$ is the dimension of set $\mathcal{F}_t$ in bits.

\textit{Time-varying popular set}: At each time slot $t$, each of the $K$ active users requests a file from the time-varying set $\mathcal{F}_t$ of $N$ popular files, with $t \in \{1,2,...\}.$ The indices of the files requested by the $K$ users are denoted by the vector $d_t=(d_{1,t},...,d_{K,t})$, where $d_{k,t}$ represents the file requested by user $k$. As in \cite{Ramtin}, indices are chosen uniformly without replacement in the interval $[1:N]$ following an arbitrary order. 

The set of popular files $\mathcal{F}_t$ evolves according to the Markov model considered in \cite{Ramtin}. The model posits that, at the time scale of the delivery of the chunks in the library, say a few milliseconds, the set of popular files is slowly time-varying. As such, at this scale, a Markovian model can closely approximate more complex memory models. Reference \cite{Ramtin} performed an experimental study on the movie rating of films released after 2005 by Netflix and found the assumption to be well suited to represent real popularity dynamics.  Markov models are also routinely used in the networking literature, see, e.g., \cite{Ps}. Accordingly, given the popular set $\mathcal{F}_{t-1}$ at time slot $t-1$, with probability $1-p$, no new popular content is generated and we have $\mathcal{F}_t=\mathcal{F}_{t-1}$; while, with probability $p$, a new popular file is added to the set $\mathcal{F}_t$ by replacing a file selected uniformly at random from $\mathcal{F}_{t-1}$. We consider two cases, namely: $(i)$ \textit{known popular set}:  the cloud is informed about the set $\mathcal{F}_t$ at time $t$, e.g., by leveraging data analytics tools; $(ii)$ \textit{unknown popular set}: the set $\mathcal{F}_t$ may only be inferred via the observation of the users' requests. We note that the latter assumption is typically made in the networking literature (see, e.g., \cite{Leo1}). Furthermore, the known popular set case can reflect scenarios in which the offering of the content provider changes over time in a controlled way, e.g., for curated content services. It also provides a useful baseline scenario for the practical and complex case with unknown popular set.

\textit{Edge channel}: The signal received by the $k$th user in any symbol of the time slot $t$ is
\begin{equation}\label{rcvr}
Y_{k,t}=\sum_{m=1}^{M}H_{k,m,t}X_{m,t} + Z_{k,t},
\end{equation}
where $H_{k,m,t}$ is the channel gain between $m$th EN and $k$th user at time slot $t$; $X_{m,t}$ is the signal transmitted by the $m$th EN; and $Z_{k,t}$ is additive noise at $k$th user. The channel coefficients are assumed to be independent and identically distributed (i.i.d.) according to a continuous distribution and to be time-invariant within each slot. Also, the additive noise $Z_{k,t} \sim \mathcal{CN}(0,1)$ is i.i.d. across time and users.  At each time slot $t$, all the ENs, cloud and users have access to the global channel state information (CSI) about the wireless channels $H_t=\{\{H_{k,m,t} \}_{k=1}^{K} \}_{m=1}^{M}$.

\begin{figure}[t]
\vspace{-5mm}
\centering
\includegraphics[width=.6\textwidth]{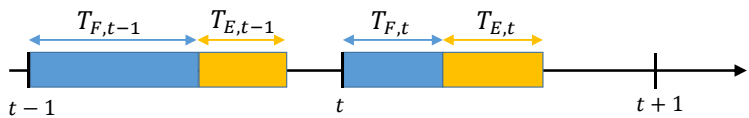}
\caption{ Illustration of the definition of fronthaul and edge delivery times for the serial deliver mode as a function of the time slot index $t$.}
\label{TS_fig}
\vspace{-6mm}
\end{figure}

\textit{System operation}: The system operates according to a combined fronthaul, caching, edge transmission and decoding policy. As illustrated in Fig. \ref{TS_fig}, in the delivery phase, at each time slot $t$, a number of symbol durations $T_{F,t}$ are allocated to fronthaul transmission and a number $T_{E,t}$ to transmission on the edge channel. In general, these durations are a function of the time slot index $t$, since the durations depend on the randomly selected request vector $d_t$ and on the currently cached files. Note that the duration of a time slot is immaterial to the analysis, as it is assumed to be sufficiently large to accommodate the overall transmission time $T_t= T_{F,t}+ T_{E,t}$. Formal definitions are provided below. 

 \noindent $\bullet$  \textit{Fronthaul policy}: The cloud transmits a message $U_{m,t}$ to each EN $m$ in any time slot $t$ as a function of the current demand vector $d_{t}$,  ENs' cache contents, and CSI $H_{t}$, as well as, in the case of known popular set, the set $\mathcal{F}_t$. The fronthaul capacity limitations impose the condition $\textrm{H}(U_{m,t})\leq T_{F,t}C_F$, where $T_{F,t}$ is the duration (in symbols) of the fronthaul transmission $U_{m,t}$ in time slot $t$ for all ENs $m=1,...,M$. 

 \noindent $\bullet$ \textit{Caching policy}:  After fronthaul transmission, in each time slot $t$, any EN $m$ updates its cached content $S_{m,t-1}$ in the previous slot based on the fronthaul message $U_{m,t}$, producing the updated cache content $S_{m,t}$. Due to cache capacity constraints, we have the inequality $\textrm{H}(S_{m,t}) \leq \mu NL$, for all slots $t$ and ENs $m$. More specifically, as in \cite{Avik2}, we allow only for intra-file coding. Therefore, the cache content $S_{m,t}$ can be partitioned into independent subcontents $S^f_{m,t}$, each obtained as a function of a single file $f$. In the case of known popular set, we can assume without loss of generality that only popular files are cached. Furthermore, we constrain all files to be cached with the same number of bits at all ENs, yielding the constraint $ H(S_{m,t}^ f) \leq \mu L$. As shown in \cite[Sec. III-A ]{Avik2}, this is known to be approximately optimal in the offline set-up. Under an unknown set of popular files, we cannot generally guarantee that only popular files are cached. Consistently with the choices made for the case of known popular set, we assume that, at each time $t$, a certain number $N_c$ of files is cached at all ENs and that the stored fraction for each cached file $f$ must satisfy the constraint $H(S_{m,t}^ f)<=\mu(N/Nc)L$.

 \noindent $\bullet$ \textit{Edge transmission policy}: Upon updating the caches, the edge transmission policy at each EN $m$ transmits the codeword $X_{m,t}$, of duration $T_{E,t}$ on the wireless channel as a function of the current demand vector $d_t$, CSI $H_t$, cache contents $S_{m,t}$ and fronthaul messages $U_{m,t}$. We assume a per-slot power constraint $P$ for each EN. 

 \noindent $\bullet$ \textit{Decoding policy}: Each user $k$ maps its received signal $Y_{k,t}$ in \eqref{rcvr} over a number $T_{E,t}$ of channel uses to an estimate $\hat{F}_{d_{t,k}}$ of the demanded file $F_{d_{t,k}}$.

Denoting a joint fronthaul/caching/edge transmission and decoding policy as $\Pi$, the probability of error of a policy $\Pi$ at slot $t$ is defined as the worst-case probability
\begin{equation}
\begin{aligned}\label{error}
\begin{array}{ll}
\textrm{P}_{e,t}=\underset{k \in \{1,...,K\}}{\max}\textrm{Pr}(\hat{F}_{d_{t,k}} \neq F_{d_{t,k}}),
\end{array}
\end{aligned}
\end{equation}
which is evaluated over the distributions of the popular set $\mathcal{F}_t$, of the request vector $d_t$ and of the CSI $H_t$. A sequence of policies $\Pi$ indexed by the file size $L$ is said to be \textit{feasible} if, for all $t$, we have $\textrm{P}_{e,t} \rightarrow 0$ when $L \rightarrow \infty$.

\subsection{Long-term Normalized Delivery Time (NDT)}\label{NDT_S}
For given parameters $(M,K,N,\mu,C_F,P)$, the average delivery time per bit in slot $t$ achieved by a feasible policy under serial fronthaul-edge transmission is defined as the sum of fronthaul and edge contributions 
\begin{equation}\label{DTB_S}
\Delta_t(\mu,C_F,P)=\Delta_{F,t}(\mu,C_F,P) + \Delta_{E,t}(\mu,C_F,P), \\
\end{equation}
where the fronthaul and edge latencies per bit are given as 
\begin{align}\label{DTB_S_F}
\Delta_{F,t}(\mu,C_F,P)&= \underset{L \rightarrow \infty}{\lim}\frac{1}{L}\textrm{E}  [T_{F,t}]  ~~\text{and}~~ 
\Delta_{E,t}(\mu,C_F,P)= \underset{L \rightarrow \infty}{\lim}\frac{1}{L}\textrm{E}  [T_{E,t}].
\end{align}
In \eqref{DTB_S_F}, the average is taken with respect to the distributions of $\mathcal{F}_t$, $d_{t}$ and $H_{t}$, and we have made explicit the dependence only on the system resource parameters $(\mu,C_F,P)$.

As in \cite{Avik2}, in order to evaluate the impact of a finite fronthaul capacity in the high-SNR regime, we let the fronthaul capacity scale with the SNR parameter $P$ as $C_F=r \log (P)$, where $r \geq 0$ measures the ratio between fronthaul and wireless capacities at high SNR.
Furthermore, we study the scaling of the latency with respect to a reference system in which each user can be served with no interference at the Shannon capacity $\log(P) + o(\log P)$. Accordingly, for any achievable delivery time per bit \eqref{DTB_S_F}, the Normalized Delivery Times for fronthaul and edge transmissions in time slot $t$  \cite{Avik2} are defined as   
\begin{align}\label{FNDT}
\delta_{F,t}(\mu,r) &=\underset{P \rightarrow \infty}{\lim}\frac{\Delta_{F,t}(\mu,r \log (P),P)}{1/\log(P)} ~~\text{and}~~  
\delta_{E,t}(\mu,r) =\underset{P \rightarrow \infty}{\lim}\frac{\Delta_{E,t}(\mu,r \log (P),P)}{1/\log(P)},
\end{align}
respectively. 
In \eqref{FNDT},  the delivery time(s) per bit in \eqref{DTB_S_F} are normalized by the term $1 / \log(P)$, which measures the delivery time per bit at high SNR of  the mentioned reference system \cite{Avik2}. The NDT in time slot $t$ is defined as the sum $\delta_t(\mu,r)=\delta_{E,t}(\mu,r)+\delta_{F,t}(\mu,r)$.

In order to capture the memory entailed by online edge caching policies on the system performance, we introduce the \textit{long-term NDT} metric. To this end, we take the standard approach considered in dynamic optimization problems of evaluating the performance in terms of averages over a long time horizon (see, e.g., \cite{MDP}).
\begin{equation}\label{avNDT}
\bar{\delta}_{\text{on}}(\mu,r)=\underset{T \rightarrow \infty}{\lim \sup} \frac{1}{T}\underset{t=1}{\overset{T}{\sum}} \delta_t(\mu,r).
\end{equation}
This metric provides a measure of the expected number of resources needed for content delivery and on the effect of caching decisions on the long-term performance of the system.
We denote the minimum long-term NDT over all feasible policies under the known popular set assumption as $\bar{\delta}^*_{\text{on,k}}(\mu,r)$, while $\bar{\delta}^*_{\text{on,u}}(\mu,r)$ denotes the minimum long-term NDT under the unknown popular set assumption.

As a benchmark, we also consider the minimum NDT
for offline edge caching $\delta_{\text{off}}^*(\mu,r)$ as studied in \cite{Avik2}. By definition, we have the inequalities $\delta_{\text{off}}^*(\mu,r) \leq \bar{\delta}_{\text{on,k}}^*(\mu,r) \leq \bar{\delta}_{\text{on,u}}^*(\mu,r)$. Note that the first inequality merely indicates that the performance in the presence of a static set of popular files can be no worse than in the presence of a time-varying set of popular files.

\section{Preliminaries: Offline caching}\label{OFF}
In this section, we first summarize for reference some key results on offline caching in F-RAN from \cite{Avik2}. With offline caching, the set of popular files $\mathcal{F}_t=\mathcal{F}$ is time invariant and caching takes place in a separate placement phase. Reference \cite{Avik2} identified offline caching and delivery policies that are optimal within a multiplicative factor of $2$ in terms of NDT achieved in each time slot. The policies are based on fractional caching, whereby an uncoded fraction $\mu$ of each file is cached at the ENs, and on three different delivery approaches, namely EN cooperation, EN coordination, and C-RAN transmission.
 
\textit{EN cooperation}: This approach is used when all ENs cache all files in the library. In this case, joint Zero-Forcing (ZF) precoding can be carried out at the ENs so as to null interference at the users. This can be shown to require an edge and fronthaul-NDTs in \eqref{FNDT} equal to \cite{Avik2}
\begin{equation}\label{coop_e}
\delta_{\text{E,Coop}}=\frac{K}{\min\{M,K\}} ~\text{and}~\delta_{\text{F,Coop}}=0
\end{equation}
in order to communicate reliably the requested files to all users. Note that, when the number of ENs is larger than the number of active users, i.e., $M \geq K$, we have $\delta_{\text{E,Coop}}=1$, since the performance becomes equivalent to that of the considered interference-free reference system (see Sec. \ref{NDT_S}). For reference, we also write the fronthaul-NDT $\delta_{\text{F,Coop}}=0$ since this scheme does not use fronthaul resources.

\textit{EN coordination}: This approach is instead possible when the ENs store non-overlapping fractions of the requested files. Specifically, if each EN caches a different fraction of the popular files, \textit{Interference Alignment (IA)} can be used on the resulting so-called X-channel\footnote{In an X-channel, each transmitter has an independent message for each receiver \cite{Jafar}.}. This yields the edge and fronthaul-NDTs \cite{Avik2}
\begin{equation}\label{coor_e}
\delta_{\text{E,Coor}}=\frac{M+K-1}{M}, ~\text{and}~\delta_{\text{F,Coor}}=0.
\end{equation}

\textit{C-RAN transmission}: While EN coordination and cooperation are solely based on edge caching, C-RAN transmission uses cloud and fronthaul resources.  Specially, C-RAN performs ZF precoding at the cloud, quantizes the resulting signals and sends them on the fronthaul links to the ENs. The ENs act as relays that transmit  the received fronthaul messages on the wireless channel. The resulting edge and fronthaul-NDTs are equal to  \cite{Avik2}
\begin{equation}\label{cran_e}
\delta_{\text{E,C-RAN}}(r)=\delta_{\text{E,Coop}}=\frac{K}{\min\{M,K\}} ~\text{and}~\delta_{\text{F,C-RAN}}(r)=\frac{K}{Mr}.
\end{equation}
Note that the edge-NDT is the same as for EN cooperation due to ZF precoding at the cloud, while the fronthaul NDT is inversely proportional to the fronthaul rate $r$. 
 
\textbf{Offline caching policy}: 
In the placement phase, the offline caching strategy operates differently depending on the values of the fronthaul rate $r$. 

\noindent $\bullet$ \emph{Low fronthaul regime}: If the fronthaul rate $r$ is smaller than a threshold  $r_{th}$, the scheme attempts to maximize the use of the ENs' caches by distributing the maximum fraction of each popular file among all the ENs. When $\mu \leq 1/M $, this is done by storing  non-overlapping fractions of each popular file  at different ENs, leaving a fraction uncached (see top-left part of Fig. \ref{seroff}). When $\mu \geq 1/M $, this approach yields a fraction $(\mu M-1)/(M-1)$ of each file that is shared by all ENs with no uncached parts (see top-right part of Fig. \ref{seroff}). The threshold is identified in \cite{Avik2} as $r_{th}=K(M-1)/(M(\min\{M,K\}-1))$.  

\noindent $\bullet$ \emph{High fronthaul}: If $r \geq r_{th}$, a common $\mu$-fraction of each file is placed at all ENs, as illustrated in the bottom part of Fig. \ref{seroff}, in order to maximize the opportunities for EN cooperation, hence always leaving a fraction uncached unless $\mu=1$. 

\textbf{Offline delivery policy}: In the delivery phase, the policy operates as follows. With reference to Fig. \ref{seroff}, fractions of each requested file stored at different ENs are delivered using EN coordination; the uncached fractions are delivered using C-RAN transmission; and fractions shared by all ENs are delivered using EN cooperation. Time sharing between pairs of such strategies is used in order to transmit different fractions of the requested files. For instance, when $r \geq r_{th}$, the approach time-shares between EN cooperation and C-RAN transmission (see bottom of Fig. \ref{seroff}).

\textbf{Achievable NDT:} We denote the achievable NDT in each time slot of the outlined offline caching and delivery policy by $\delta_{\text{off,ach}}(\mu,r)$. Analytical expression for $\delta_{\text{off,ach}}(\mu,r)$ can be obtained from \eqref{coop_e}-\eqref{cran_e} using time sharing. In particular if a fraction $\lambda$ of the requested file is delivered using a policy with NDT $\delta'$ and the remaining fraction using a policy with NDT $\delta''$, the overall NDT is given by $\lambda \delta'+(1-\lambda) \delta''$ (see \cite[Proposition 4]{Avik2} for more details). Letting $\delta^*_{\text{off}}(\mu,r)$ be the minimum offline NDT, the achievable NDT of the offline caching and delivery policy was proved to be within a factor of $2$ of optimality in the sense that we have the inequality \cite[Proposition 8]{Avik2}
\begin{equation}
\begin{aligned}\label{achtomin}
\frac{\delta_{\text{off,ach}}(\mu,r)}{\delta^*_{\text{off}}(\mu,r)} \leq 2.
\end{aligned}
\end{equation}

\begin{figure}[t]\label{ref_off}
\vspace{-5mm}
\centering
\includegraphics[width=.6\textwidth]{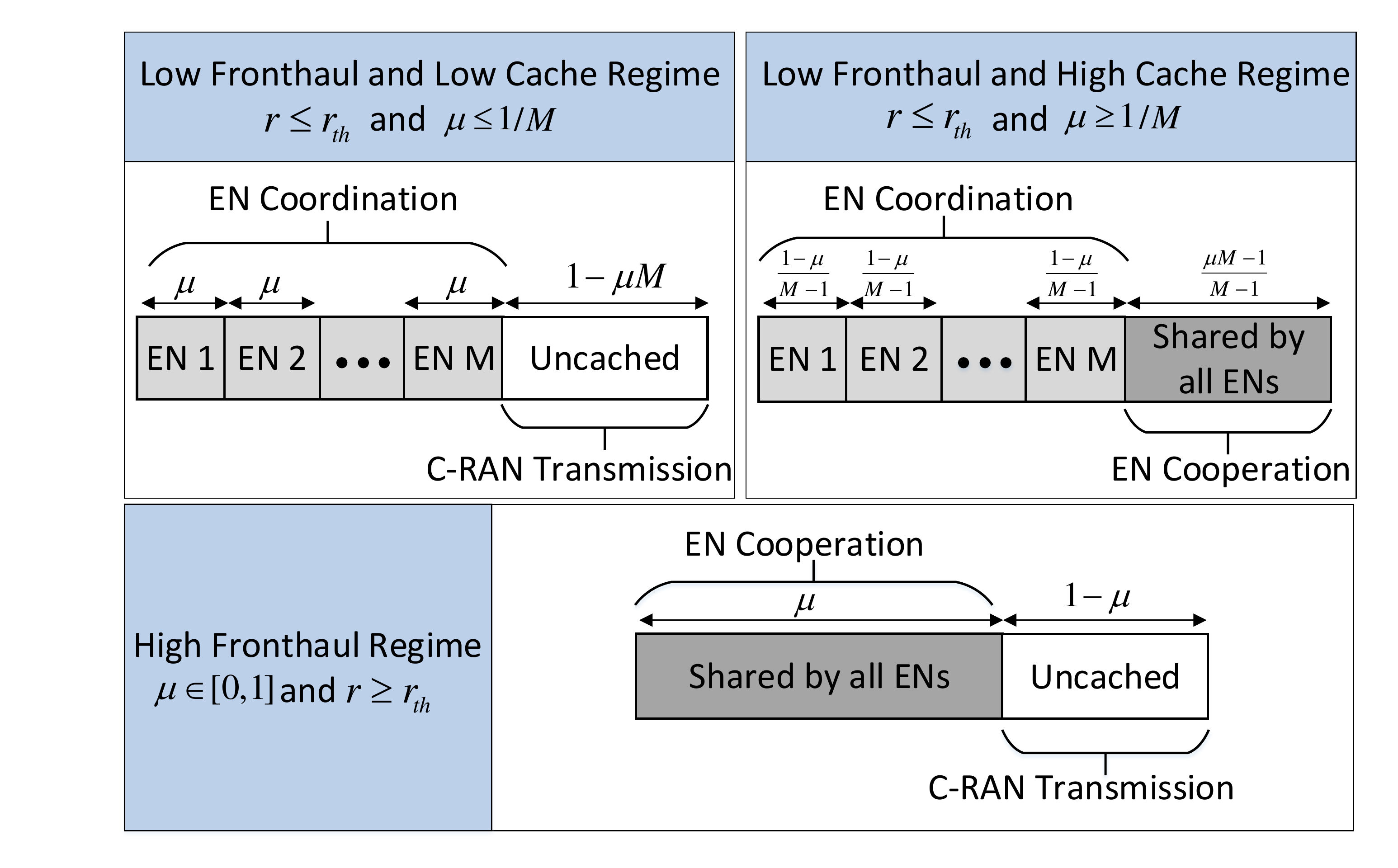}
\caption{ Illustration of the offline caching policy proposed in \cite{Avik2}. Note that each EN caches a fraction $\mu$ of each file.}
\label{seroff}
\vspace{-6mm}
\end{figure}

\section{Achievable Long-Term NDT}\label{Fundamental limits}
In this section, we propose online edge caching with fronthaul-edge transmission policies operating under known and unknown popular set assumptions, and evaluate their performance for serial fronthaul-edge transmission. Lower bounds on the minimum long-term NDT will be presented in Sec. \ref{sec_para}.

\subsection{C-RAN Delivery}\label{Griddy}
C-RAN delivery neglects the cached contents and uses C-RAN transmission in each time slot. This achieves a long-term NDT that coincides  with the offline NDT \eqref{cran_e} obtained in each time slot, i.e.,
\begin{equation}\label{greedy}
\bar{\delta}_{\text{C-RAN}}(r)=\delta_{\text{E,C-RAN}}(r)+\delta_{\text{F,C-RAN}}(r)=\frac{K}{\min\{M,K\}}+\frac{K}{Mr}.
\end{equation}

\subsection{Reactive Online Caching with Known Popular Set}\label{ReactK_S}
We now consider online caching schemes by focusing on reactive strategies that update the ENs' caches every time an uncached file is requested by any user. Discussion on proactive strategies is postponed to Sec. \ref{sec_para}.  We start by considering the simple case of known popular set, and study the unknown popular set case in the next subsection. 

If, at time slot $t$, $R_t$ requested files, with $0 \leq  R_t \leq K$, are not cached at the ENs, a fraction of each requested and uncached file is reactively sent on the fronthaul link to each EN at the beginning of the time slot. To select this fraction, we follow the offline caching policy summarized in Sec. \ref{OFF} and Fig. \ref{seroff}. Therefore, we cache a fraction $\mu$ of the file at all ENs, where the fraction is selected depending on the values of $r$ and $\mu$ as in Fig. \ref{seroff}.  An improved selection of the size of cached fraction will be  discussed later in this section.  In order to make space for new files, the ENs evict files that are no longer popular as instructed by the cloud. Note that this eviction mechanism is feasible since the cloud knows the set $\mathcal{F}_t$. Furthermore, it is guaranteed to satisfy the ENs' cache capacity of $\mu N L$ bits in each time slot. As a result of the cache update, each requested file is cached as required by the offline delivery strategy summarized in Sec. \ref{OFF} and Fig. \ref{seroff}, which is adopted for delivery. 

The overall NDT is hence the sum of the NDT $\delta_{\text{off,ach}}(\mu,r)$ achievable by the offline delivery policy described in Sec. \ref{OFF} and of the NDT due to the  fronthaul transfer of the $\mu$-fraction of each requested and uncached file on the fronthaul link. By \eqref{FNDT}, the latter equals $(\mu/C_F) \times \log P=\mu / r$, and hence the  achievable NDT at each time slot $t$ is
\begin{equation}\label{k_offcent_time}
\delta_{t} (\mu,r)=\delta_{\text{off,ach}} (\mu,r )+\frac{\mu \textrm{E}[R_t]}{r}.
\end{equation}
The following proposition presents the resulting achievable long-term NDT of reactive online caching with known popular set.
\begin{prop}\label{known_lem}
For an $M \times K$ F-RAN with $N \geq K $, in the known popular set case, online reactive caching achieves the long-term NDT
\begin{equation}
\begin{aligned}\label{react_kkk}
 \bar{\delta}_{\emph{react,k}}(\mu,r)=\delta_{\emph{off,ach}} (\mu,r) + \frac{\mu}{r} \Big ( \frac{Kp}{K(1-p/N)+p}\Big ), 
\end{aligned}
\end{equation}
where $\delta_{\emph{off,ach}} (\mu,r )$ is the offline achievable NDT. 
\end{prop}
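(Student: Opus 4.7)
The plan is to reduce the long-term average in \eqref{avNDT} to a single stationary expectation. Starting from the per-slot identity $\delta_t(\mu,r)=\delta_{\textrm{off,ach}}(\mu,r)+\mu\,\textrm{E}[R_t]/r$ given just before the proposition, the long-term NDT splits as $\delta_{\textrm{off,ach}}(\mu,r)+(\mu/r)\lim_{T\to\infty}(1/T)\sum_{t=1}^{T}\textrm{E}[R_t]$. The joint state formed by the popular set $\mathcal{F}_t$ and the cached subset $C_t\subseteq\mathcal{F}_t$ is a finite, irreducible, aperiodic Markov chain for $0<p<1$, so by ergodicity the Cesaro mean converges to a stationary expectation $\textrm{E}[R_\infty]$, and the task reduces to computing $\textrm{E}[R_\infty]$ in closed form.

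To that end, I would first reduce $\textrm{E}[R_\infty]$ to a per-position probability. Conditional on $(\mathcal{F}_t,C_t)$, the $K$ distinct requests are drawn uniformly without replacement from $\mathcal{F}_t$, so linearity of expectation gives $\textrm{E}[R_t\mid\mathcal{F}_t,C_t]=K\,|\mathcal{F}_t\setminus C_t|/N$ and hence $\textrm{E}[R_t]=(K/N)\,\textrm{E}[|\mathcal{F}_t\setminus C_t|]$. Because the Markov update that replaces a uniformly chosen file in the popular set treats all $N$ positions symmetrically, each position has the same stationary probability $\pi$ of holding an uncached file, and therefore $\textrm{E}[R_\infty]=K\pi$.

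Finally, $\pi$ can be computed via a renewal-style argument on a single position. The occupying file is replaced by a fresh, uncached file with probability $p/N$ per slot, independently across slots, so in steady state the age $A$ since the last replacement obeys $\textrm{Pr}(A=a)=(p/N)(1-p/N)^a$ for $a\geq 0$. Conditional on $A=a$, the current file has been exposed to $a$ independent request vectors, each selecting it with probability $K/N$, so under reactive caching it is uncached iff none of these $a$ draws hits it, giving $\textrm{Pr}(\textrm{uncached}\mid A=a)=(1-K/N)^a$. Summing the resulting geometric series yields
\begin{equation*}
\pi=\frac{p/N}{1-(1-K/N)(1-p/N)}=\frac{p}{K(1-p/N)+p},
\end{equation*}
and multiplying by $K$ delivers the stated formula. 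The main obstacle is justifying this $(1-K/N)^a$ factor: it rests on the exchangeability of the $N$ positions under the popular-set chain together with the independence of the per-slot request vectors from both the popular-set dynamics and the caching history. Writing out this conditional-independence argument, and the attendant geometric stationary distribution of the age, is the only non-routine step of the proof.
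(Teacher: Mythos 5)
Your argument is correct and reaches the stated formula, but the core computation takes a genuinely different route from the paper's. The paper's proof (Appendix~\ref{known_lem_proof}, following \cite{Ramtin}) works with the aggregate count $X_t$ of cached popular files: it writes the update $X_{t+1}=X_t+R_t-V_t$, where $V_t$ indicates that a cached file is displaced from the popular set, invokes ergodicity of $\{X_t\}$ to obtain the steady-state balance $\textrm{E}[R_t]=\textrm{E}[V_t]$, and closes the system with $\textrm{E}[R_t|X_t]=K(1-X_t/N)$ and $\textrm{Pr}(V_t=1|X'_t)=pX'_t/N$, solving a single linear equation for $\textrm{E}[X_t]$. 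You instead compute the stationary per-position probability of being uncached by a renewal/age argument: a geometric backward recurrence time with parameter $p/N$ multiplied by $(1-K/N)^a$ for never having been requested, summed as a geometric series; the two computations agree since $\frac{p/N}{1-(1-K/N)(1-p/N)}=\frac{p}{K(1-p/N)+p}$. Both are valid. The balance argument buys you freedom from justifying the stationary age law and the conditional independence of the request rounds given the age (the step you rightly flag as delicate; it does hold because the request vectors are i.i.d.\ across slots and independent of the replacement coin flips, and a popular file, once reactively cached, is never evicted while popular under the known-popular-set eviction rule). Your argument buys the full per-position stationary description rather than just a mean, and avoids characterizing the recurrent class of $\{X_t\}$. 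One minor imprecision: the joint state $(\mathcal{F}_t,C_t)$ is not a finite chain when newly popular files are always fresh; the positive-recurrent object is the per-position age/cached-status process (or the count $X_t$), which is what your downstream computation actually uses, so nothing breaks.
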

\begin{proof}
The result follows by analyzing the Markov chain that describes the number of popular cached files which in turn contributes to the second term in \eqref{k_offcent_time}. Details can be found in Appendix \ref{known_lem_proof}.
\end{proof}
We now propose an improvement that is based on an \textit{adaptive} choice of the file fraction to be cached, as a function of the probability $p$ of new file, as well as the fractional cache size $\mu$ and the fronthaul rate $r$. The main insight here is that, if the probability $p$ is large, it is preferable to cache a fraction smaller than $\mu$ when the resulting fronthaul overhead offsets the gain accrued by means of caching. It is emphasized that caching a fraction smaller than $\mu$ entails that the ENs' cache capacity is partially unused.        
\begin{prop}\label{known_react}
For an $M \times K$ F-RAN with $N \geq  K $, in the known popular set case, online reactive caching with adaptive fractional caching achieves the following long-term NDT
\begin{align}\label{adNDTser}
\bar{\delta}_{\emph{react,adapt,k}}(\mu,r)  
&=\left\{
\begin{array}{ll}
\bar{\delta}_{\emph{react,k}}(\mu,r) &  \text{if}~ p \leq p_0(\mu,r)   \\
\bar{\delta}_{\emph{react,k}}(1/M,r)  &  \text{if}~ p_0(\mu,r) \leq p \leq p_1(\mu,r)    \\
\bar{\delta}_{\emph{C-RAN}}(r) &  \text{if}~  p_1(\mu,r) \leq p \leq 1  
\end{array} \right. 
\end{align}
where  probabilities $p_0(\mu,r)$ and $p_1(\mu,r)$ satisfy $p_0(\mu,r) \leq p_1(\mu,r)$ and the full expressions are given in Appendix \ref{pknown_react}.  Furthermore, we have the inequalities $\bar{\delta}_{\emph{react,adapt,k}}(\mu,r) \leq \bar{\delta}_{\emph{react,k}}(\mu,r)$ with equality if and only if $ p \leq p_0(\mu,r)$. 
\end{prop}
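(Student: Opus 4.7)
The plan is to generalize the reactive scheme of Proposition~1 by treating the cached fraction as a \emph{design parameter} $\nu \in [0,\mu]$ rather than fixing it to the maximal value $\mu$. Since caching only $\nu \leq \mu$ bits per popular file trivially satisfies the per-EN capacity constraint $\mu N L$, I can rerun the entire derivation of Proposition~1 with $\nu$ in place of $\mu$: delivery uses the offline strategy of Section~\ref{OFF} evaluated at cached fraction $\nu$, while the fronthaul overhead for replacing evicted files still follows the Markov-chain argument that produces the factor $Kp/(K(1-p/N)+p)$, which depends only on the popularity dynamics and not on $\nu$. The resulting achievable long-term NDT, as a function of $\nu$, is
\begin{equation*}
g(\nu) \;=\; \delta_{\text{off,ach}}(\nu,r) \;+\; \frac{\nu}{r}\cdot\frac{Kp}{K(1-p/N)+p},
\end{equation*}
and the adaptive scheme simply picks the $\nu \in [0,\mu]$ that minimizes $g(\nu)$.

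Next, I would exploit the structure of $\delta_{\text{off,ach}}(\nu,r)$, which is piecewise linear in $\nu$ with knots at $\nu=1/M$ (where the offline placement switches between the non-overlapping and partially-shared regimes of Fig.~\ref{seroff}) and $\nu=1$. Since the added term $\nu q(p)/r$ is linear in $\nu$, $g(\nu)$ is piecewise linear on $[0,\mu]$ and therefore attains its minimum at one of the corner points $\{0,\,1/M,\,\mu\}$. These three corners correspond exactly to the three arms of the claimed expression: $\nu=\mu$ gives $\bar\delta_{\text{react,k}}(\mu,r)$, $\nu=1/M$ reduces to $\bar\delta_{\text{react,k}}(1/M,r)$, and $\nu=0$ degenerates to pure C-RAN delivery with long-term NDT $\bar\delta_{\text{C-RAN}}(r)$ as in \eqref{greedy}.

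Finally, I would determine $p_0(\mu,r)$ and $p_1(\mu,r)$ by pairwise comparison of these three candidates. Because $q(p)$ is strictly increasing in $p$ while the offline differences $\delta_{\text{off,ach}}(\mu,r)-\delta_{\text{off,ach}}(1/M,r)$ and $\delta_{\text{off,ach}}(1/M,r)-\delta_{\text{off,ach}}(0,r)$ are constants independent of $p$, each pair of candidate arms crosses at a unique $p$: equating the $\nu=\mu$ and $\nu=1/M$ arms defines $p_0(\mu,r)$, and equating the $\nu=1/M$ and $\nu=0$ arms defines $p_1(\mu,r)$; the closed-form expressions go in the appendix. The inequality $\bar\delta_{\text{react,adapt,k}}(\mu,r)\leq\bar\delta_{\text{react,k}}(\mu,r)$ is then immediate because $\nu=\mu$ is always a feasible choice in the minimization, with equality holding iff $\nu=\mu$ is already optimal, i.e. iff $p\leq p_0(\mu,r)$. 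The main obstacle I anticipate is establishing the ordering $p_0(\mu,r)\leq p_1(\mu,r)$ uniformly in $(\mu,r)$, and handling the boundary case $\mu<1/M$ where $\nu=1/M$ is infeasible and the middle regime must be shown to collapse; both reduce to elementary but case-heavy linear inequalities that must be verified separately according to whether $r$ lies above or below the offline threshold $r_{th}=K(M-1)/(M(\min\{M,K\}-1))$.
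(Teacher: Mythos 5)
Your proposal takes essentially the same route as the paper: the paper's proof simply defines $\bar{\delta}_{\text{react,adapt,k}}(\mu,r)=\min_{\mu'\leq\mu}\bar{\delta}_{\text{react,k}}(\mu',r)$ and asserts that ``some algebra'' yields the thresholds $p_0(\mu,r)$ and $p_1(\mu,r)$, while your piecewise-linearity and corner-point argument (with corners at $0$, $1/M$, $\mu$ and pairwise crossings in $p$) is precisely that algebra made explicit. The details you flag as remaining work --- the ordering $p_0\leq p_1$ (which follows from convexity of $\delta_{\text{off,ach}}$ in the cache fraction) and the collapse of the middle regime when $r\geq r_{th}$ or $\mu<1/M$ --- are exactly the case analysis the paper leaves implicit, so there is no gap.
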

\begin{proof}
See Appendix \ref{pknown_react}. 
\end{proof}

As anticipated, Proposition \ref{known_react} is based on the observation that, when the probability of a new file is larger than a threshold, identified as $p_1(\mu,r)$ in \eqref{adNDTser}, it is preferable not to update the caches and simply use C-RAN delivery. Instead, with moderate probability $p$, i.e., $ p_0(\mu,r) \leq p \leq p_1(\mu,r)$, the performance of the reactive scheme in Proposition \ref{known_lem} is improved by caching a smaller fraction than $\mu$, namely $1/M \leq \mu$. Finally, when $p \leq p_0(\mu,r)$, no gain can be accrued by caching a fraction smaller than $\mu$. This observation is illustrated in Fig. \ref{EX}, which shows the NDT of reactive caching in the known popular set case, without adaptive caching (Proposition \ref{known_lem}) and with adaptive caching (Proposition \ref{known_react}) in the top figure and the corresponding cached fraction in the bottom figure for $M=10$, $K=N=5$, $r=1.1$ and $\mu=0.5$. 

\begin{figure}\label{EX}
\centering
\subfigure[]{
\includegraphics[width=.7\textwidth]{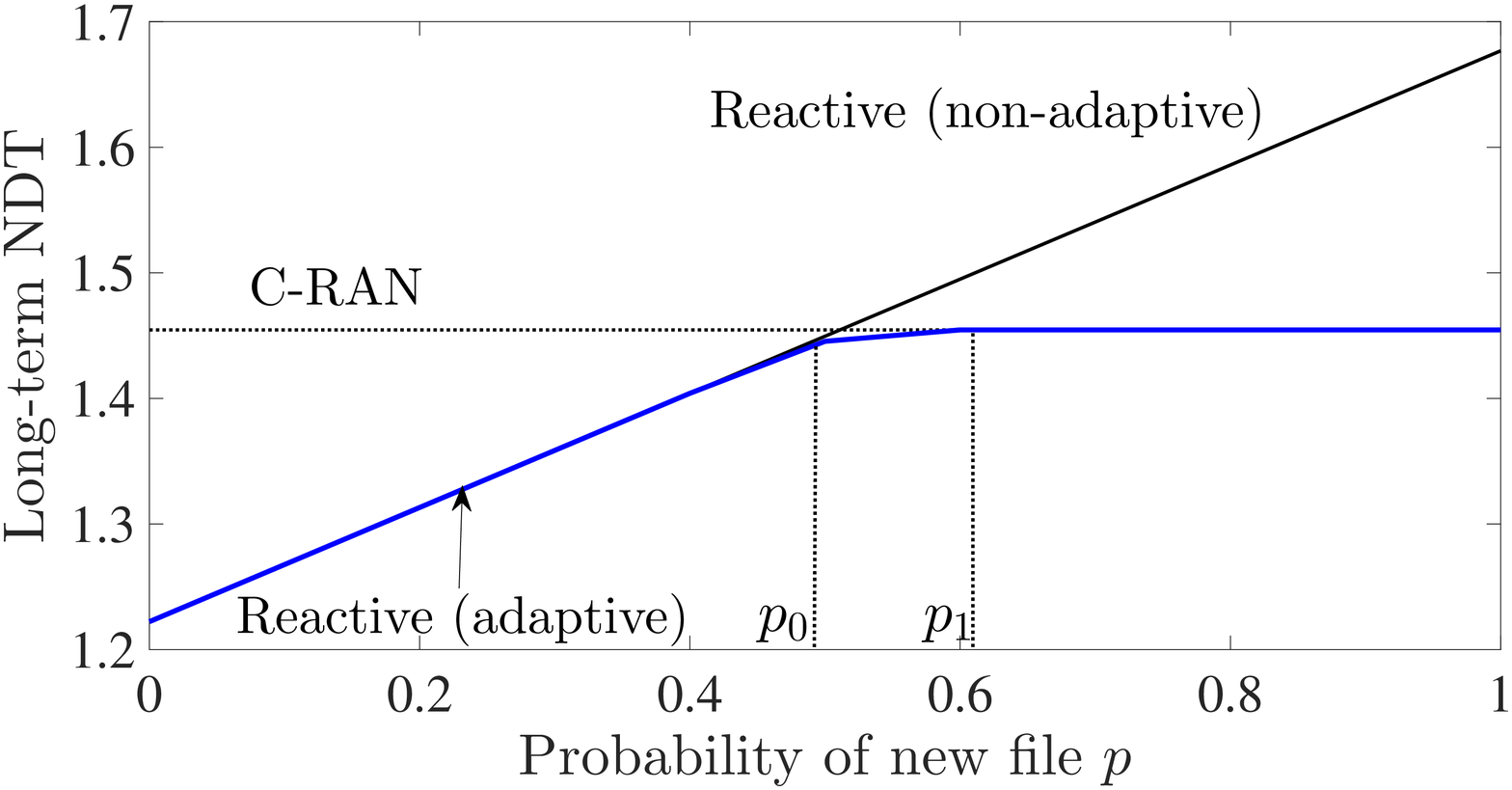}
\vspace{-10mm}} 
\subfigure[]{
\includegraphics[ width=.7\textwidth]{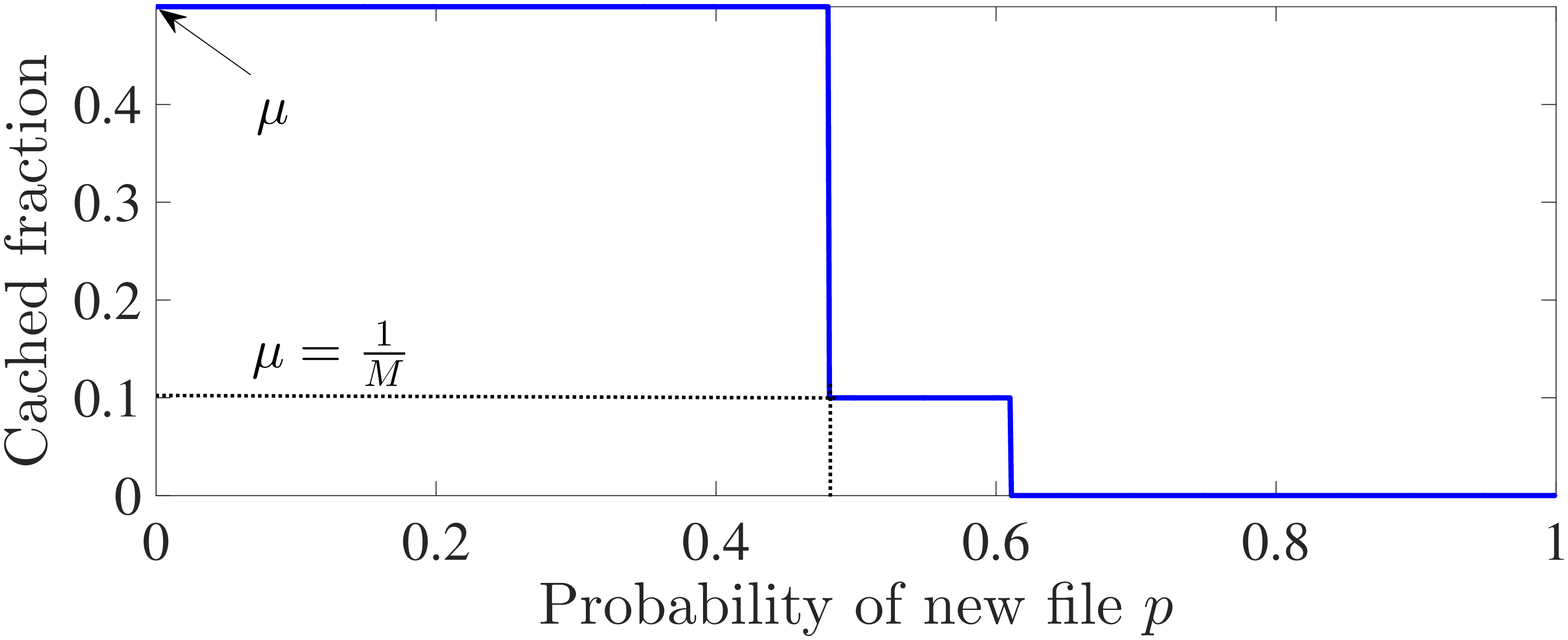}
} 
\caption{Reactive online caching with known popular set under non-adaptive caching (Proposition \ref{known_lem}) and adaptive caching  (Proposition \ref{known_react}) with $M=10$, $K=N=5$, $r=1.1$ and $\mu=0.5$: (a) NDT; (b) fraction cached by the adaptive caching scheme. }
\label{EX}
\end{figure}

\subsection{Reactive Online Caching with Unknown Popular Set}\label{un_re}
In the absence of knowledge about the popular set $\mathcal{F}_t$, the cloud cannot instruct the ENs about which files to evict while guaranteeing that no popular files will be removed from the caches. To account for this constraint, we now consider a reactive caching scheme whereby the ENs evict from the caches a randomly selected file. Random eviction can be improved by other eviction strategies, which are more difficult to analyze, as discussed in Sec. \ref{Numer}.

To elaborate, as pointed out in \cite{Ramtin}, in order to control the probability of evicting a popular file, it is useful for the ENs to cache a number of $N^{'}=\alpha N$ files for some $\alpha >1$. Note that in general the set of $N^{'} > N$ cached files in the cached contents $S_{m,t}$ of all ENs $m$ generally contains files that are no longer in the set $\mathcal{F}_t$ of $N$ popular files. 

If $R_t$ requested files, with $0 \leq  R_t \leq K$, are not cached at the ENs, we propose to transfer a $\mu / \alpha$-fraction of each requested and uncached file on the fronthaul link to each EN by following the offline caching policy in Fig. \ref{seroff} with $\mu / \alpha$ in lieu of $\mu$. Caching a fraction $\mu/\alpha$ is necessary in order to satisfy the cache constraint given the larger number $N'$ of cached files. Adaptive caching is also possible but not covered here due to space limitation. Delivery then takes place via the achievable offline delivery strategy reviewed in Fig. \ref{seroff}, with the only caveat that $\mu / \alpha$ should replace $\mu$. 

The overall NDT is hence the sum of the NDT $\delta_{\text{off,ach}}(\mu / \alpha,r)$ achievable by the offline delivery policy when the fractional cache size is $\mu / \alpha$ and of the NDT due to the  fronthaul transfer of the $\mu / \alpha$-fraction of each requested and uncached file on the fronthaul link. By \eqref{FNDT}, the latter equals $((\mu / \alpha)/C_F) \times \log P=\mu / (\alpha r)$, and hence the overall achievable NDT at each time slot $t$ is
\begin{equation}\label{offcent_time}
\delta_{t} (\mu,r)=\delta_{\text{off,ach}} \Big (\frac{\mu}{\alpha},r \Big )+\frac{\mu}{\alpha}\Big (\frac{\textrm{E}[R_t]}{r} \Big).
\end{equation}
The following proposition presents an achievable long-term NDT for the proposed reactive online caching policy.
\begin{prop}\label{fup1}
\noindent For an $M \times K$ F-RAN with $N \geq K $, in the unknown popular set case, the online reactive caching scheme achieves the long-term NDT that is upper bounded as
\begin{equation}
\begin{aligned}\label{pre_final}
 \bar{\delta}_{\emph{react,u}}(\mu,r) \leq \delta_{\emph{off,ach}} \Big(\frac{\mu}{\alpha},r\Big) + \frac{p\mu}{r(1-p/N)(\alpha-1)},
\end{aligned}
\end{equation}
where $\delta_{\emph{off,ach}} (\mu,r)$ is offline achievable NDT and $\alpha > 1$ is an arbitrary parameter.
\end{prop}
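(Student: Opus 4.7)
The plan is as follows. Substituting the per-slot identity \eqref{offcent_time} into the definition of the long-term NDT in \eqref{avNDT}, and noting that $\delta_{\text{off,ach}}(\mu/\alpha,r)$ is independent of $t$, the entire proof reduces to showing the bound
\[
\limsup_{T\to\infty}\frac{1}{T}\sum_{t=1}^{T}\textrm{E}[R_t]\;\leq\;\frac{\alpha\,p}{(\alpha-1)(1-p/N)},
\]
after which multiplying through by $\mu/(\alpha r)$ recovers the second summand in \eqref{pre_final}. Hence the entire burden of the argument is to control the long-term miss count.

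To establish this inequality I would introduce the Markov chain whose state $N_p(t)\in\{0,1,\ldots,N\}$ counts the number of currently popular files, i.e., files belonging to $\mathcal{F}_t$, that reside in the ENs' common cache of $N'=\alpha N$ file identities. Two mechanisms drive the per-slot evolution: (i) the popular-set refresh, which with probability $p$ replaces a uniformly random file of $\mathcal{F}_{t-1}$, decrementing $N_p$ by one with conditional probability $N_p/N$ (the chance that the displaced file was actually cached); and (ii) the $R_t$ demand-driven fetches, each of which inserts a popular file and evicts one of the $N'$ cached identities uniformly at random, so that a stale slot is vacated with conditional probability $1-N_p/N'$. Since the $K$ requests are drawn without replacement from $\mathcal{F}_t$, the conditional miss count satisfies $\textrm{E}[R_t\mid N_p^-(t)]=K(1-N_p^-(t)/N)$, where $N_p^-(t)$ denotes the intermediate value of the state immediately after the refresh but before the requests are served. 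Because the chain is irreducible and aperiodic on a finite state space, it admits a unique stationary distribution to which the Ces\`aro average of $\textrm{E}[R_t]$ converges.

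The main step, and the principal technical obstacle, is to exploit drift balance at stationarity. Writing $q=\textrm{E}[N_p]/N$ for the stationary fraction of cache slots occupied by popular files, equating the expected per-slot inflow of popular files into the cache with the expected outflow caused by the popular-set refresh produces a relation of the form $\textrm{E}[R_t]\,(1-q(1-p/N)/\alpha)=pq$, where the factor $1-p/N$ reflects the intra-slot ordering in which the refresh precedes the requests. Combining this with the elementary bounds $q\le 1$ in the numerator and $1-q(1-p/N)/\alpha\geq(\alpha-1)/\alpha$ in the denominator, and then loosening by the factor $1/(1-p/N)$ to absorb a secondary contribution coming from the refresh step, yields the target estimate. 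The nontrivial subtlety is the joint dependence between $R_t$ and the within-slot evolution of $N_p$, since the $R_t$ consecutive random evictions in a given slot are correlated; a monotone coupling that replaces them by independent uniform draws only loosens the inequality in the desired direction and resolves this point. Finally, since $\alpha>1$ is a free parameter of the scheme, the bound in \eqref{pre_final} holds for every such $\alpha$, as claimed.
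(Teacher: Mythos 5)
Your proposal is correct and follows essentially the same route as the paper: the decomposition obtained by plugging \eqref{offcent_time} into \eqref{avNDT} is identical, and the bound $\limsup_{T\to\infty}\frac{1}{T}\sum_{t=1}^{T}\textrm{E}[R_t]\le \alpha p/\big((\alpha-1)(1-p/N)\big)$ that you target is exactly the paper's inequality \eqref{rami}. The only difference is that the paper obtains \eqref{rami} by directly invoking \cite[Lemma 3]{Ramtin} (after noting that the demand, caching and random-eviction dynamics coincide with that reference), whereas you sketch the underlying stationary drift-balance argument for the Markov chain counting cached popular files; your sketch is consistent with, and if made rigorous would reproduce, the cited lemma.
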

\begin{proof}
Plugging the achievable NDT \eqref{offcent_time} into the definition of long-term NDT in \eqref{avNDT}, we have
\begin{equation}
\begin{aligned}\label{bound_offcent_avge}
 \bar{\delta}_{\text{react,u}}(\mu,r)= \delta_{\text{off,ach}} \Big (\frac{\mu}{\alpha},r \Big )+ \Big ( \frac{\mu}{\alpha r} \Big )\underset{T \rightarrow \infty}{\lim \sup} \frac{1}{T}\sum_{t=1}^{T} \textrm{E}[R_t]. \\
\end{aligned}
\end{equation}
Furthermore, because of the users' demand distribution, caching and random eviction policies are the same as in \cite{Ramtin}, we can leverage \cite[Lemma 3]{Ramtin} to obtain the following upper bound on the long-term average number of requested but not cached files as
\begin{equation}
\begin{aligned}\label{rami}
\underset{T \rightarrow \infty}{\lim \sup} \frac{1}{T}\sum_{t=1}^{T} \textrm{E}[R_t] \leq \frac{p}{(1-p/N)(1-1/\alpha)}.
\end{aligned}
\end{equation}
Plugging \eqref{rami} into \eqref{bound_offcent_avge} completes the proof.
\end{proof}
We emphasize that the right-hand side of \eqref{bound_offcent_avge} is an upper bound on an achievable NDT and hence it is also achievable. 

\section{Pipelined fronthaul-edge transmission}\label{sec_para}
As an alternative to the serial delivery model discussed in Sec. \ref{proform}, in the pipelined fronthaul-edge transmission model, the ENs can transmit on the wireless channel while receiving messages on the fronthaul link. Simultaneous transmission is enabled by the orthogonality of fronthaul and edge channels. Intuitively,  pipelining can make caching  more useful in reducing the transmission latency with respect to serial delivery. In fact, with pipelining, while the ENs transmit the cached files, they can receive the uncached information on the fronthaul links at no additional cost in terms of latency. Pipelined delivery was studied in \cite{Avik2} under offline caching.  With online caching, as we will discuss here, pipelined fronthaul-edge transmission creates new opportunities that can be leveraged by means of proactive, rather than reactive, caching. We recall that proactive caching entails the storage of as-of-yet unrequested files at the ENs. 

In the following, we first discuss the system model for pipelined transmission, then we review results for offline caching from \cite{Avik2}, and finally propose both reactive and proactive online caching policies. The analysis leverages the serial delivery techniques proposed in Sec. \ref{proform} as building blocks.
\subsection{System Model}\label{sysmo2}
The system model for pipelined fronthaul-edge transmission follows in Sec. \ref{sysmo1} with the following differences. First, as discussed, each EN can transmit on the edge channel and receive on the fronthaul link at the same time. Transmission on the wireless channel can hence start at the beginning of the transmission interval, and the ENs use the information received on the fronthaul links in a causal manner. Accordingly, at any time instant $l$ within a time slot $t$, the edge transmission policy of EN $m$ 
maps the demand vector $d_t$, the global CSI $H_{t}$, the local cache content $S_{m,t}$ and the fronthaul messages $U_{m,t,l'}$ received at previous instants $l' \leq l-1$, to the transmitted signal $X_{m,t,l}$ at time $l$. 

Second, the NDT performance metric needs to be adapted. To this end, we denote the overall transmission time in symbols within slot $t$ as $T^{pl}_{t}$, where the superscript  \enquote{pl}  indicates pipelined transmission. For a given sequence of feasible policies, the average achievable delivery time per bit in slot $t$ is defined as
\begin{equation}\label{DTB_P}
\Delta^{pl}_t(\mu,C_F,P)=\underset{L \rightarrow \infty}{\lim}\frac{1}{L}\textrm{E}  [T^{pl}_{t}], \\
\end{equation}
where the average is taken with respect to the distributions of the random variables $\mathcal{F}_t$, $d_{t}$ and $H_{t}$.
The corresponding NDT achieved at time slot $t$ is 
\begin{equation}\label{NDT_P}
\delta^{pl}_t(\mu,r)=\underset{P \rightarrow \infty}{\lim}\frac{\Delta^{pl}_t(\mu,r \log P,P)}{1/\log(P)}
\end{equation}
and, the \textit{long-term NDT} is defined as
\begin{equation}\label{avNDT_P}
\bar{\delta}^{pl}_{\text{on}}(\mu,r)=\underset{T \rightarrow \infty}{\lim \sup} \frac{1}{T}\underset{t=1}{\overset{T}{\sum}} \delta^{pl}_t(\mu,r).
\end{equation}
We denote the minimum long-term NDT over all feasible policies under the known popular set assumption as $\bar{\delta}^{pl^*}_{\text{on,k}}(\mu,r)$, while $\bar{\delta}^{pl^*}_{\text{on,u}}(\mu,r)$ indicates the minimum long-term NDT under the unknown popular set assumption.
As a benchmark, we also consider the minimum NDT
for offline edge caching $\delta_{\text{off}}^{pl^*}(\mu,r)$ as studied in \cite{Avik2}. By construction, we have the inequalities $\delta_{\text{off}}^{pl^*}(\mu,r) \leq \bar{\delta}_{\text{on,k}}^{pl^*}(\mu,r) \leq \bar{\delta}_{\text{on,u}}^{pl^*}(\mu,r)$. Furthermore, while pipelining generally improves the NDT performance as compared to serial delivery, the following result demonstrates that the NDT can be reduced by at most a factor of $2$.   
\begin{lemma}\label{POrderOptimalityOFFMIN}
\noindent  For an $M \times K$ F-RAN with  $N \geq K $, the minimum long-term NDT under online caching with pipelined fronthaul-edge transmission satisfies 
\begin{equation}
\begin{aligned}\label{achtomin2}
\bar{\delta}^{pl^*}_{\emph{on}}(\mu,r) \geq \frac{1}{2}\bar{\delta}^{*}_{\emph{on}}(\mu,r),
\end{aligned}
\end{equation}
where $\bar{\delta}^{*}_{\emph{on}}(\mu,r)$ is the minimum long-term NDT of online caching under serial fronthaul-edge transmission. 
\end{lemma}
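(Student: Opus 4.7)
The proof plan is to show a generic reduction: any feasible pipelined policy can be converted into a feasible serial policy whose per-slot delivery time is at most twice that of the original. Combined with the definition of the long-term NDT in \eqref{avNDT} and \eqref{avNDT_P}, this yields the factor-of-two bound between the corresponding minima.

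First I would fix an arbitrary sequence of feasible pipelined policies $\Pi^{pl}$, indexed by $L$, achieving long-term NDT $\bar{\delta}^{pl}_{\text{on}}(\mu,r)$. In any slot $t$ with total pipelined transmission time $T^{pl}_{t}$, let $T_{F,t}$ denote the number of symbols during which the cloud actively transmits on the fronthaul links (equivalently, the duration supporting all nonempty messages $U_{m,t,l}$), and let $T_{E,t}$ denote the number of symbols during which the ENs actively transmit on the wireless channel. Since both activities are confined to the pipelined slot, we have the bounds $T_{F,t}\leq T^{pl}_{t}$ and $T_{E,t}\leq T^{pl}_{t}$.

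Next I would construct a serial policy $\Pi$ that emulates $\Pi^{pl}$ as follows: in each slot $t$, first replay the entire fronthaul transmission of $\Pi^{pl}$ on the fronthaul segment of duration $T_{F,t}$, then replay its entire edge transmission on the edge segment of duration $T_{E,t}$. This emulation is well defined because every fronthaul message in $\Pi^{pl}$ is a causal function of the demand vector $d_{t}$, the global CSI $H_{t}$, the previous cache contents, and, in the known case, $\mathcal{F}_{t}$, none of which depend on edge-channel feedback inside slot $t$. Moreover, the edge encoder in $\Pi$ starts with access to \emph{all} fronthaul messages, i.e., a superset of the information available at any pipelined instant, so it can reproduce the pipelined encoding map exactly. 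The caching update uses the same fronthaul messages and hence inherits the same storage constraint $\textrm{H}(S_{m,t})\leq \mu NL$, and the decoded estimates have the same joint distribution as in $\Pi^{pl}$, so $\textrm{P}_{e,t}$ is preserved and $\Pi$ is feasible.

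The total serial delivery time per slot of $\Pi$ then satisfies $T_{t}=T_{F,t}+T_{E,t}\leq 2T^{pl}_{t}$. Dividing by $L$ and taking $L\to\infty$ and then $P\to\infty$ as in \eqref{DTB_S}--\eqref{FNDT} and \eqref{DTB_P}--\eqref{NDT_P}, this transfers to the per-slot NDT inequality $\delta_{t}(\mu,r)\leq 2\delta^{pl}_{t}(\mu,r)$. Averaging over $t$ and taking $\limsup$ as in \eqref{avNDT}, \eqref{avNDT_P} yields $\bar{\delta}_{\text{on}}(\mu,r)\leq 2\bar{\delta}^{pl}_{\text{on}}(\mu,r)$, and since $\Pi^{pl}$ is arbitrary, infimizing both sides gives $\bar{\delta}^{*}_{\text{on}}(\mu,r)\leq 2\bar{\delta}^{pl^{*}}_{\text{on}}(\mu,r)$, which is the claim. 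The reduction applies identically under the known and unknown popular set assumptions. The main subtlety I expect is the formal identification of $T_{F,t}$ and $T_{E,t}$ for pipelined schemes whose fronthaul and edge activities may be arbitrarily interleaved, and the verification that the serialized scheme preserves the causal input–output structure of the original; the superset observation above is the key step that makes this rigorous.
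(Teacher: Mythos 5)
Your proposal is correct and follows essentially the same route as the paper's proof: serialize the pipelined policy by completing the fronthaul transmission before starting edge transmission, observe that both the fronthaul and edge durations are individually bounded by the pipelined slot length, and conclude that the serial slot time is at most doubled. Your additional discussion of the causal emulation and the "superset of information" point is a more careful elaboration of the same argument, not a different one.
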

\begin{proof}
Consider an optimal policy for pipelined transmission. We show that it can be turned into a serial policy with a long-term NDT which is at most double the long-term NDT for the pipelined scheme. To this end, it is sufficient for the ENs to start transmission on the edge after they receive messages on the fronthaul links. The resulting NDT in each time slot is hence at most twice the optimal long-term NDT of pipelined transmission, since in the pipelined scheme both  fronthaul and edge transmission times are bounded by the overall latency.
\end{proof}

\begin{figure}[t]\label{ref_off}
\vspace{-5mm}
\centering
\includegraphics[width=.5\textwidth]{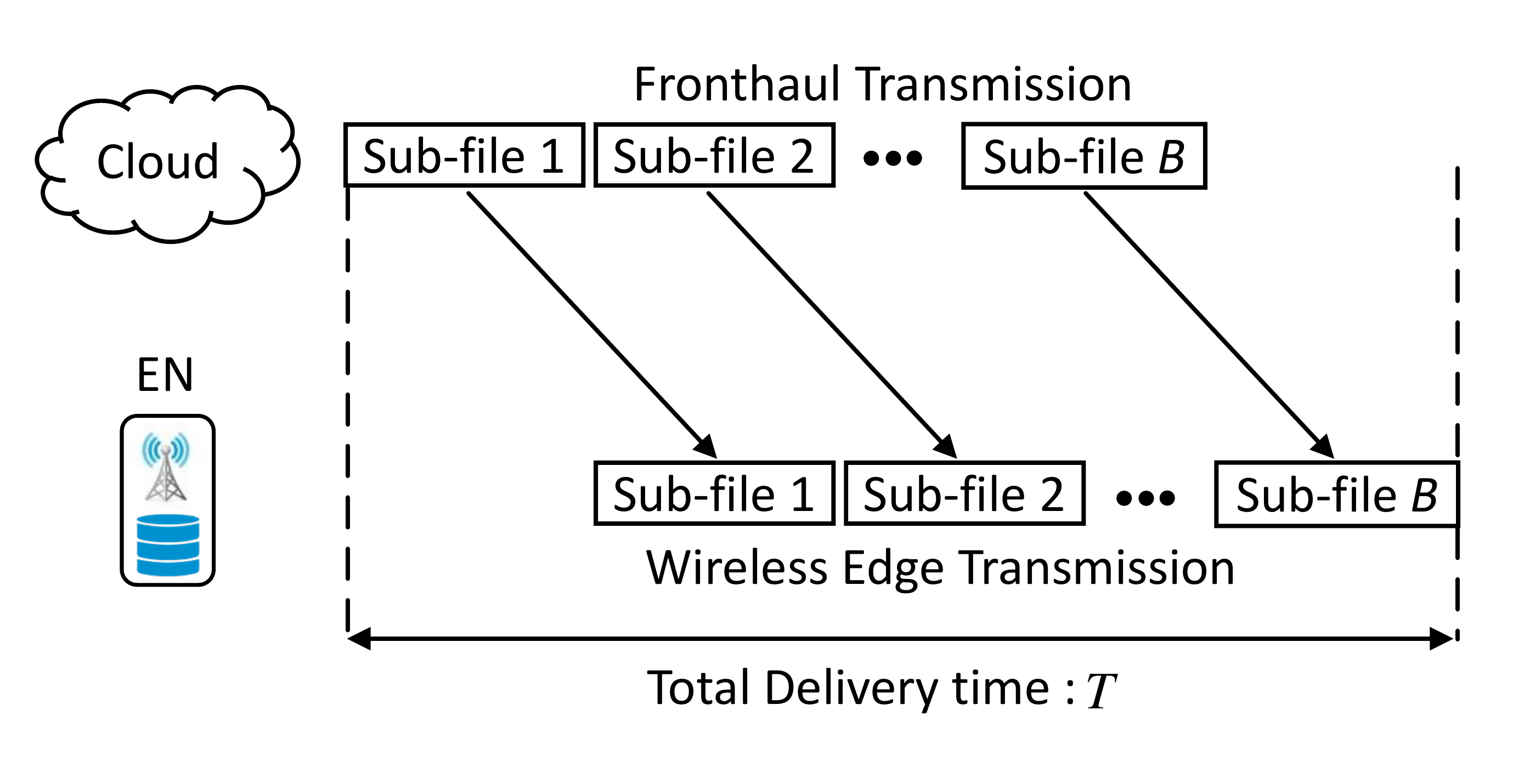}
\caption{Block-Markov encoding converts a serial fronthaul-edge transmission policy into a pipelined transmission policy.}
\label{ONex}
\vspace{-6mm}
\end{figure}
\subsection{Preliminaries}\label{pre_para}
As done for serial transmission, we first review the existing results for offline caching with pipelined transmission. 
The achievable scheme for offline caching proposed in \cite{Avik2} utilizes \textit{block-Markov encoding}, which is illustrated in Fig. \ref{ONex}. 
Block-Markov encoding converts a serial fronthaul-edge strategy into a pipelined scheme. To this end, each file of $L$ bits is divided into $B$ sub-files, each of $L/B$ bits. The transmission interval is accordingly divided into $B+1$ sub-frames. The first sub-file is sent in the first sub-frame on the fronthaul, and then on the wireless link in the second sub-frame. During the transmission of first sub-file on the wireless link, the second sub-file is sent on the fronthaul link in the second sub-frame. As illustrated in Fig. \ref{ONex}, the same transmission scheme is used for the remaining sub-blocks. 

If the original serial transmission scheme has edge-NDT $\delta_{E}$ and fronthaul-NDT $\delta_{F}$, then the NDT of pipelined scheme when $B \rightarrow \infty$ can be proved to be \cite{Avik2}
\begin{equation}\label{def_pipe_off}
\delta^{pl}_{\text{off}}=\max \big ( \delta_{E},\delta_{F} \big ).
\end{equation}
This is because the maximum of the latencies of the simulatneously occuring fronthaul and edge transmissions determines the transmisson time in each time slot. In \cite{Avik2}, an achievable NDT $\delta^{pl}_{\text{off,ach}}(\mu,r)$ is obtained by using block Markov encoding along with file splitting and time sharing using the same constituent schemes considered in Sec. \ref{OFF} for serial transmission. We refer for details to \cite[Sec. VI-B]{Avik2}. We now discuss online caching strategies.

\subsection{C-RAN Delivery}\label{PGriddy}
Similar to Sec. \ref{Griddy}, with C-RAN delivery, the long-term NDT coincides with the NDT obtained in each time slot using in \eqref{def_pipe_off} the edge and fronthaul-NDTs in \eqref{cran_e}, yielding
\begin{equation}
\begin{aligned}\label{pgreedy}
\bar{\delta}^{pl}_{\text{C-RAN}}(\mu,r)= \max \Bigg ( \frac{K}{\min(M,K)}, \frac{K}{Mr} \Bigg ). 
\end{aligned}
\end{equation}

\subsection{Reactive Online Caching} \label{Rea_P}
We now briefly discuss reactive caching policies that extend the approach in Sec. \ref{ReactK_S} and Sec. \ref{un_re} to pipelined transmission. We recall that, with reactive caching, the requested files that are not partially cached at ENs are reactively sent on the fronthaul. Furthermore, if the set of popular files is known, the ENs can evict the unpopular cached files to make space for newly received files. Otherwise, in the unknown popular set case, a randomly selected file can be evicted.  As for serial transmission, we propose to choose the fraction to be cached and to perform delivery by following the offline caching strategy of \cite{Avik2}. Obtaining  closed forms for the achievable long-term NDTs appears to be prohibitive.  
In Sec. \ref{Numer}, we provide Monte Carlo simulation to numerically illustrate the performance of the proposed scheme. 
 
\subsection{Proactive Online Caching} \label{Pro_P}
With pipelined transmission, by \eqref{def_pipe_off}, when the rate of the fronthaul link is sufficiently large, the NDT in a time slot may be limited by the edge transmission latency. In this case, the fronthaul can be utilized to send information even if no uncached file has been requested without affecting the NDT. This suggests that proactive caching, whereby uncached and unrequested files are pushed to the ENs, potentially advantageous over reactive caching. Note that this is not the case for serial transmission, whereby any information sent on the fronthaul contributes equally to the overall NDT, irrespective of the edge latency, making it only useful to transmit requested files on the fronthaul links\footnote{This corrects an erroneous statement in \cite{Azimi}.}. 

To investigate this point, here we study a simple  proactive online caching scheme. Accordingly, every time there is a new file in the popular set,  a $\mu$-fraction of the file is  proactively sent on the fronthaul links in order to update the ENs' cache content. This fraction is selected to be distinct across the ENs if $\mu \leq 1/M$, hence enabling EN coordination (recall the top-left part of Fig. \ref{seroff}); while the same fraction $\mu$ is cached at all ENs otherwise, enabling delivery via EN cooperation (see bottom of Fig. \ref{seroff}). 
\begin{prop}\label{p_proact}
For an $M \times K$ F-RAN with $N \geq K $ and pipelined transmission, proactive online caching achieves the long-term NDT
\begin{align}
\bar{\delta}^{pl}_{\emph{proact}}(\mu,r)
& =  p \max \Big \{(\mu M)\delta_{\emph{F,Coor}}+(1-\mu M)\delta_{\emph{F,C-RAN}}(r)+ \frac{\mu }{r}, (\mu M) \delta_{\emph{E,Coor}}+(1-\mu M)\delta_{\emph{E,C-RAN}}(r) \Big \} \nonumber \\  
&+ (1-p) \max \Big \{(\mu M)\delta_{\emph{F,Coor}}+(1-\mu M)\delta_{\emph{F,C-RAN}}(r), (\mu M) \delta_{\emph{E,Coor}}+(1-\mu M)\delta_{\emph{E,C-RAN}}(r) \Big \},  \label{pproach1}
\end{align}
for $\mu \in [0,1/M]$, and
\begin{align}
\bar{\delta}^{pl}_{\emph{proact}}(\mu,r)
& =  p \max \Big \{\mu \delta_{\emph{F,Coop}}+(1-\mu)\delta_{\emph{F,C-RAN}}(r)+ \frac{\mu }{r}, \mu \delta_{\emph{E,Coop}}+(1-\mu)\delta_{\emph{E,C-RAN}}(r) \Big \} \nonumber \\  
  &+ (1-p) \max \Big \{\mu \delta_{\emph{F,Coop}}+(1-\mu)\delta_{\emph{F,C-RAN}}(r), \mu \delta_{\emph{E,Coop}}+(1-\mu)\delta_{\emph{E,C-RAN}}(r) \Big \},  \label{pproach2}
\end{align} 
for $\mu \in [1/M,1]$,
with the definitions given in  \eqref{coop_e}-\eqref{cran_e}.
\end{prop}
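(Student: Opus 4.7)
The plan is to analyze the per-slot NDT by applying the pipelined identity \eqref{def_pipe_off} separately to slots in which a new file enters the popular set and to slots in which the library is unchanged, and then take the long-term average. The key invariant is that, under the proposed proactive policy, the caches are kept synchronized with $\mathcal{F}_t$ at all times: whenever a new file replaces an old one (which happens independently with probability $p$ per slot), the newly popular file is immediately pushed onto the fronthaul in the appropriate pattern (distinct $\mu$-fractions across ENs for $\mu\le 1/M$, or the same $\mu$-fraction at every EN for $\mu\ge 1/M$), and the displaced file is evicted. Hence in every slot each of the $K$ requested files is already cached with exactly the pattern required by the offline delivery building block of Sec.~\ref{OFF}.

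First I would compute the edge-NDT by the time-sharing argument reviewed in Sec.~\ref{OFF}. For $\mu \le 1/M$, a fraction $\mu M$ of each requested file is cached across distinct ENs and delivered via EN coordination with edge-NDT $\delta_{\text{E,Coor}}$, while the remaining fraction $1-\mu M$ is delivered via C-RAN with edge-NDT $\delta_{\text{E,C-RAN}}(r)$; the overall edge-NDT is therefore $(\mu M)\delta_{\text{E,Coor}}+(1-\mu M)\delta_{\text{E,C-RAN}}(r)$. For $\mu \ge 1/M$, the shared fraction $\mu$ is delivered via EN cooperation and the remainder via C-RAN, giving $\mu\delta_{\text{E,Coop}}+(1-\mu)\delta_{\text{E,C-RAN}}(r)$. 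The fronthaul-NDT is obtained by the same time-sharing, keeping only the C-RAN terms (since EN cooperation/coordination use no fronthaul), and then adding the proactive update overhead. Because a $\mu L$-bit update is sent to each EN at rate $C_F=r\log P$, this overhead contributes $\mu/r$ to the fronthaul-NDT, but only in slots where a new file is added.

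Next I would apply \eqref{def_pipe_off} slotwise: in a new-file slot (probability $p$), the NDT is the maximum of the fronthaul-NDT that includes the $\mu/r$ term and the edge-NDT computed above; in a no-new-file slot (probability $1-p$), it is the maximum of the plain fronthaul-NDT and the same edge-NDT. Since the two types of slots are i.i.d.\ across $t$, linearity of the long-term average in \eqref{avNDT_P} turns the Ces\`aro limit into the convex combination $p(\cdot)+(1-p)(\cdot)$ reported in \eqref{pproach1}--\eqref{pproach2}. The main obstacle, in my view, is not the algebra but justifying the cache-invariance claim under the Markov popularity dynamics: one must argue that the proactive scheme can perform the eviction in a way that preserves the prescribed per-file caching pattern across all ENs simultaneously, which implicitly uses the known-popular-set assumption so the cloud can coordinate which file to drop. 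Once this invariance is in place, the per-slot NDT depends only on whether a new file just arrived, and the expression for $\bar{\delta}^{pl}_{\text{proact}}(\mu,r)$ follows directly.
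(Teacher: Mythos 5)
Your proposal is correct and follows essentially the same route as the paper's own (much terser) proof: condition each slot on whether a new file enters the popular set (probability $p$ versus $1-p$), compute edge- and fronthaul-NDTs by time sharing between EN coordination/cooperation and C-RAN according to the cache regime, add the $\mu/r$ proactive-update overhead to the fronthaul side only in new-file slots, apply the pipelined max from \eqref{def_pipe_off} slotwise, and average. Your explicit attention to the cache-synchronization invariant (which requires the known-popular-set assumption for coordinated eviction) is a point the paper leaves implicit, but it does not change the argument.
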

\begin{proof}
With probability of $(1-p)$, the popular set remains unchanged and the NDT in the given slot is obtained by time sharing between C-RAN delivery, for the uncached $(1-\mu)$-fraction of the requested files, and either EN coordination and EN cooperation, depending on the value of $\mu$ as discussed above, for the cached $\mu$-fraction. Instead, with probability $p$, there is a new file in the popular set, and a $\mu$ fraction of file is proactively sent on the fronthaul links, resulting in an additional term $\mu/r$ to the fronthaul-NDT of offline schemes. 
\end{proof}

\section{Impact of time-varying popularity}\label{Comp}
In this section, we compare the performance of offline caching in the presence of a static set of popular files with the performance of online caching under the considered dynamic popularity model. The analysis is intended to bring insight into  the impact of a time-varying popular set on the achievable delivery latency.  We focus here on the case in which the number of ENs is larger than the number of users, namely, $M \geq K$. 
\begin{prop}\label{final_uppro1}
\noindent For an $M \times K$ F-RAN and $N > M \geq K \geq 2 $ and $r >0$, under both serial and pipelined delivery modes with known and unknown popular set, the minimum long-term NDT $\bar{\delta}_{\emph{on}}^*(\mu,r)$ satisfies the condition
\begin{equation}\label{longpro1}
\bar{\delta}_{\emph{on}}^*(\mu,r) = c\delta^*_{\emph{off}} (\mu,r) +  O \Big (\frac{1}{r} \Big ),
\end{equation}
where $ c \leq 4$ is a constant and $\delta^*_{\emph{off}} (\mu,r)$ is the minimum NDT under offline caching.  
\end{prop}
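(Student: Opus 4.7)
The plan is to upper bound the minimum long-term NDT in each of the four scenarios (serial/pipelined, known/unknown popular set) by the long-term NDT of a specific reactive caching scheme already analyzed in Sec.~\ref{ReactK_S} and Sec.~\ref{un_re}, and then invoke the factor-of-two optimality \eqref{achtomin} of the offline achievable NDT together with simple constant bounds on $\delta^*_{\text{off}}(\mu,r)$ that are available under the hypothesis $M\geq K$.

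For the serial/known case, Proposition~\ref{known_lem} immediately gives $\bar{\delta}^*_{\text{on,k}}(\mu,r)\leq \delta_{\text{off,ach}}(\mu,r)+(\mu/r)\cdot Kp/(K(1-p/N)+p)$. The first summand is at most $2\delta^*_{\text{off}}(\mu,r)$ by \eqref{achtomin}, while the second is $O(1/r)$ because $\mu\leq 1$ and the bracketed fraction is bounded by $1$. For the serial/unknown case, I would apply Proposition~\ref{fup1} with a fixed choice such as $\alpha=2$, yielding $\bar{\delta}^*_{\text{on,u}}(\mu,r)\leq \delta_{\text{off,ach}}(\mu/2,r)+2p\mu/(r(2-2p/N))$, where the second term is again $O(1/r)$.

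The only nontrivial step is relating $\delta_{\text{off,ach}}(\mu/2,r)$ back to $\delta^*_{\text{off}}(\mu,r)$, since shrinking the cache can only increase the achievable NDT. My plan is to squeeze both quantities between constant bounds that are valid thanks to $M\geq K$: on the one hand, C-RAN delivery is cache-agnostic and, combined with $\min(M,K)=K$, gives $\delta_{\text{off,ach}}(\mu',r)\leq \delta_{\text{E,C-RAN}}+\delta_{\text{F,C-RAN}}(r)=1+K/(Mr)$ for every $\mu'\in[0,1]$; on the other hand, the interference-free reference gives the lower bound $\delta^*_{\text{off}}(\mu,r)\geq 1$. Chaining these, $\delta_{\text{off,ach}}(\mu/2,r)\leq \delta^*_{\text{off}}(\mu,r)+K/(Mr)=\delta^*_{\text{off}}(\mu,r)+O(1/r)$, which closes the serial/unknown bound.

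For the pipelined cases I would observe that any serial policy is a special case of a pipelined policy, so $\bar{\delta}^{pl^*}_{\text{on}}(\mu,r)\leq \bar{\delta}^*_{\text{on}}(\mu,r)$, and the two serial bounds above transfer verbatim when the comparison is against the serial offline NDT. If the claim is read as a pipelined-vs-pipelined comparison, one extra step is needed: the sequentialization argument used in the proof of Lemma~\ref{POrderOptimalityOFFMIN} applies equally to offline policies and delivers $\delta^*_{\text{off}}(\mu,r)\leq 2\delta^{pl^*}_{\text{off}}(\mu,r)$, after which composition with the serial bound yields $\bar{\delta}^{pl^*}_{\text{on}}(\mu,r)\leq 4\delta^{pl^*}_{\text{off}}(\mu,r)+O(1/r)$, which is the origin of the stated constant $c\leq 4$. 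The main obstacle throughout is the $\delta_{\text{off,ach}}(\mu/\alpha,r)$-versus-$\delta^*_{\text{off}}(\mu,r)$ comparison in the unknown case; it is precisely this step that forces the assumption $M\geq K$ into the hypothesis, since without it the edge-NDT of the interference-free reference is no longer $1$ and the squeezing argument between constant upper and lower bounds breaks.
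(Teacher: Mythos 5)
Your proposal establishes only the achievability direction of \eqref{longpro1}, and this is a genuine gap: the equality in the statement is obtained in the paper from the two-sided sandwich \eqref{longpro2}, whose left-hand side, $\frac{1-Kp/N}{2}\delta^*_{\text{off}}(\mu,r)+\frac{Kp}{N}\big(1+\frac{\mu}{r}\big)\leq\bar{\delta}^*_{\text{on,k}}(\mu,r)$, is a converse that occupies roughly half of the argument and has no counterpart in your plan. The paper proves it via Proposition \ref{onlb}: a genie-aided system in which the ENs are handed, at no fronthaul cost, the optimal offline cache contents for the current set $\mathcal{F}_t$; with probability $Kp/N$ some user requests a file that has just entered the popular set and therefore cannot be cached, and for that event a modified cut-set linear program (Lemma \ref{lboffan} with the revised constraint \eqref{LP5}) forces $\delta_{\text{on,lb}}\geq 1+\min(\mu,1/M)/r$. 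That $\Omega(p\mu/r)$ additive term is precisely what justifies the interpretation, stated immediately after the proposition and repeated in the conclusions, that the fronthaul rate is a \emph{fundamental} bottleneck under time-varying popularity. You could formally rescue the ``$=$'' by invoking the definitional inequality $\delta^*_{\text{off}}(\mu,r)\leq\bar{\delta}^*_{\text{on,k}}(\mu,r)$ from Sec.~\ref{NDT_S}, but you do not do so, and in any case that trivial bound carries no $1/r$ dependence and would empty the proposition of its intended content.

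The upper-bound half of your argument is essentially sound and in fact shorter than the paper's. Where the paper proves Lemma \ref{up_alpha1} by a three-regime case analysis comparing $\delta_{\text{off,ach}}(\mu/\alpha,r)$ against $2\delta^*_{\text{off}}(\mu,r)$, you squeeze $\delta_{\text{off,ach}}(\mu/2,r)$ between the cache-agnostic C-RAN latency $1+K/(Mr)$ and the trivial bound $\delta^*_{\text{off}}(\mu,r)\geq 1$, both of which are valid exactly because $M\geq K$; this even yields a multiplicative constant of $1$ rather than $2$ in the serial case, and your pipelined reduction mirrors the paper's use of Lemma \ref{POrderOptimalityOFFMIN} together with $\delta^*_{\text{off}}(\mu,r)\leq 2\delta^{pl^*}_{\text{off}}(\mu,r)$. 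One small point to close: $\delta_{\text{off,ach}}$ denotes a specific time-sharing policy rather than a minimum over all options including pure C-RAN, so the inequality $\delta_{\text{off,ach}}(\mu',r)\leq 1+K/(Mr)$ needs a line of verification (it does hold for $M\geq K$ in all three cache regimes of Fig.~\ref{seroff}). None of this, however, repairs the missing converse.
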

\begin{proof}
See Appendix \ref{fund_r_pro}.
\end{proof}

Proposition \ref{final_uppro1} shows that the long-term NDT with online caching is proportional to the minimum NDT for offline caching and static popular set, with an additive gap that is inversely proportional to the fronthaul rate $r$. 
To see intuitively why this result holds, note that, when $\mu \geq 1/M$ and hence the set of popular files can be fully stored across all the $M$ EN's caches, offline caching enables the delivery of all possible users' requests with a finite delay even when $r=0$. In contrast, with online caching, the time variability of the set $\mathcal{F}_t$ of popular files implies that, with  non-zero probability, some of the requested files cannot be cached at ENs and hence should be delivered by leveraging fronthaul transmission. Therefore, the additive latency gap as a function of $r$ is a fundamental consequence of the time-variability of the content set.

\section{Numerical Results}\label{Numer}
In this section, we complement the analysis of the previous sections with numerical experiments.  We consider in turn serial transmission, as studied in  Sec. \ref{Fundamental limits}, and  pipelined transmission covered in Sec. \ref{sec_para}. 

\begin{figure}
\vspace{2mm}
\centering
\includegraphics[width=.7\textwidth]{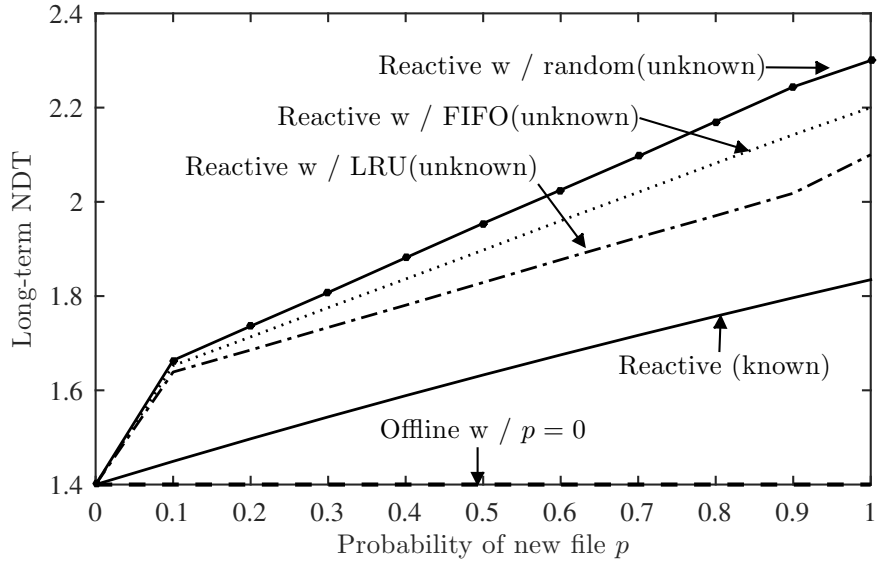}
\caption{Long-term NDT of reactive online caching with known popular set, as well as reactive online caching with unknown popular set using different eviction policies ($M=10$, $K=5$, $N=20$, $\mu=0.1$ and $r=0.2$). }
\vspace{-5.5 mm}
\label{F1}
\end{figure}
\textit{Serial delivery}: For serial transmission, we consider the performance of reactive online caching with known popular set (eq. \eqref{react_kkk}) and unknown popular set (bound in  \eqref{pre_final}).  For the latter, we evaluate the NDT via Monte Carlo simulations by averaging over a large number of realizations of the random process $R_t$ of requested but uncached files (see \eqref{bound_offcent_avge}), which is simulated starting from empty caches at time $t=1$. We also plot the NDT of offline scheme of \cite{Avik2} described in Sec. \ref{OFF} in the presence of a time-invariant popular set. 
 
We first consider the impact of the rate of change of the popular content set for $\mu=0.1$ and $r=0.2$. To this end, the long-term NDT is plotted as a function of the probability $p$ in Fig. \ref{F1}. We observe that variations in the set of popular files entail a performance loss of online caching with respect to offline caching with static popular set that increases with $p$. Furthermore, under random eviction, the lack of knowledge of the popular set is seen to cause a significantly larger NDT than the scheme that can leverage knowledge of the popular set. This performance gap can be reduced by using better eviction strategies.

To investigate this point, we evaluate also the Monte Carlo performance of reactive online caching with unknown popular set under the following standard eviction strategies: Least Recently Used (LRU), whereby the replaced file is the one that has been least recently requested by any user; and First In First Out (FIFO), whereby the file that has been in the caches for the longest time is replaced. 
From Fig. \ref{F1}, LRU and FIFO are seen to be both able to improve over randomized eviction, with the former generally outperforming the latter, especially for large values of $p$. Finally, we note that for  the long-term NDT of C-RAN (eq. \eqref{greedy}) is constant for all values of $p$ and equal to $7.5$  (not shown). 

The impact of the fronthaul rate is studied next by means of Fig. \ref{F2}, in which the long-term NDT is plotted as a function of $r$ for $\mu=0.5$ and  $p=0.8$.  The main observation is that, as $r$ increases, delivering files from the cloud via fronthaul resources as in C-RAN yields decreasing latency losses, making edge caching less useful. In contrast, when $r$ is small, an efficient use of edge resources via edge caching becomes critical, and the achieved NDT depends strongly on the online cache update strategy. 

\begin{figure}
\centering
\includegraphics[width=.7\textwidth]{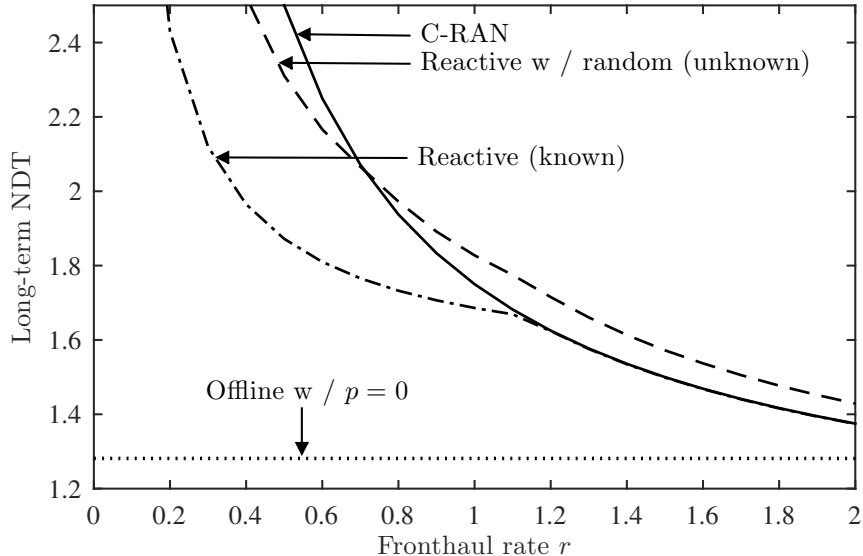}
\caption{Long-term NDT of reactive online caching with known and unknown popular set, as well as C-RAN transmission and offline caching, under serial fronthaul-edge transmission ($M=4$, $K=3$, $N=10$, $\mu=0.25$ and $p=0.9$). }
\vspace{-5.5 mm}
\label{F2}
\end{figure}  

\textit{Pipelined delivery}: We now evaluate the long-term NDT performance  of pipelined fronthaul-edge transmissions in Fig. \ref{F3}. The NDTs are computed using Monte Carlo simulations as explained above with reactive and proactive caching under known popular set. The plot shows the long-term NDT as a function of the fronthaul rate $r$. It is observed that, for all schemes, when the fronthaul rate is sufficiently large, the long-term NDT is limited by the edge-NDT (recall \eqref{def_pipe_off}) and hence increasing $\mu$ cannot reduce the NDT. In contrast, for smaller values of $r$, caching can decrease the long-term NDT. In this regime, proactively updating the ENs' cache content can yield a lower long-term NDT than reactive schemes, whereby the fronthaul links are underutilized, as discussed in Sec \ref{Rea_P}.  
\begin{figure}[t]
\centering
\includegraphics[width=.7\textwidth]{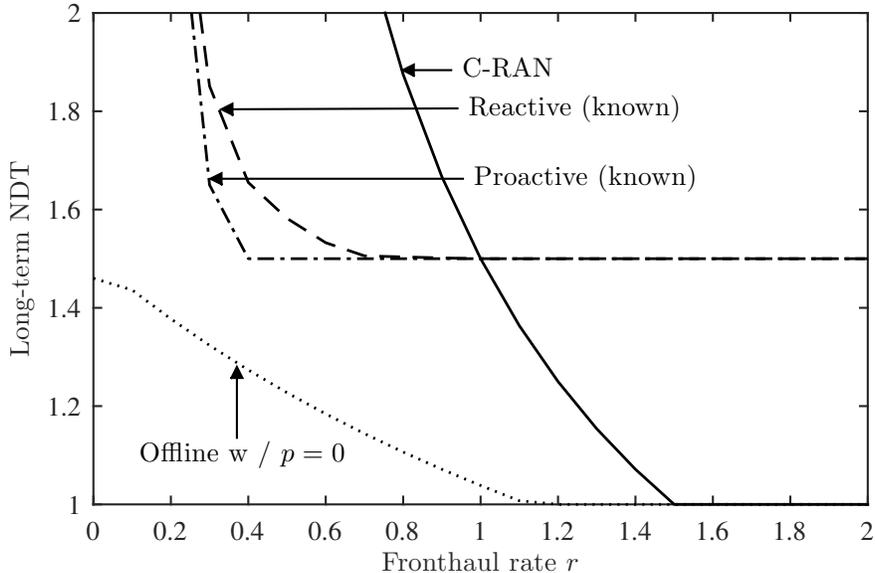}
\caption{Long-term NDT of reactive and proactive online caching with known popular set, as well as C-RAN transmission and offline caching, under pipelined fronthaul-edge transmission ($M=4$, $K=3$, $N=10$, $\mu=0.25$ and $p=0.9$). }
\vspace{-5.5 mm}
\label{F3}
\end{figure}

\textit{Serial vs. pipelined delivery}: We now compare the long-term NDT performance  of pipelined and serial fronthaul-edge transmission by means of Fig. \ref{F4}. The plot shows the long-term NDT of reactive online schemes under unknown popular set for serial transmission (eq. \eqref{bound_offcent_avge}) as well as pipelined transmission (see Sec. \ref{Rea_P}) as a function of the fractional cache size $\mu$ for $M=2$, $N=10$, $r=0.2$ and $p=0.8$.  For both cases, we evaluate the NDT via Monte Carlo simulations by averaging over a large number of realizations of the random process $R_t$ of requested but uncached files.  As discussed in Sec. \ref{Pro_P}, pipelined transmission generally provides a smaller long-term NDT, and the gain in terms of NDT is limited to a factor of at most two. It is observed that, following the discussion in Sec. \ref{Pro_P}, edge caching enables larger gains in the presence of pipelined transmission owing to the capability of the ENs to transmit and receive at the same time. In particular, when $\mu$ is large enough, the achieved long-term NDT coincides with that of offline caching, indicating that, in this regime, the performance is dominated by the bottleneck set by wireless edge transmission.

\begin{figure}[t]
\vspace{-1mm}
\centering
\includegraphics[width=.7\textwidth]{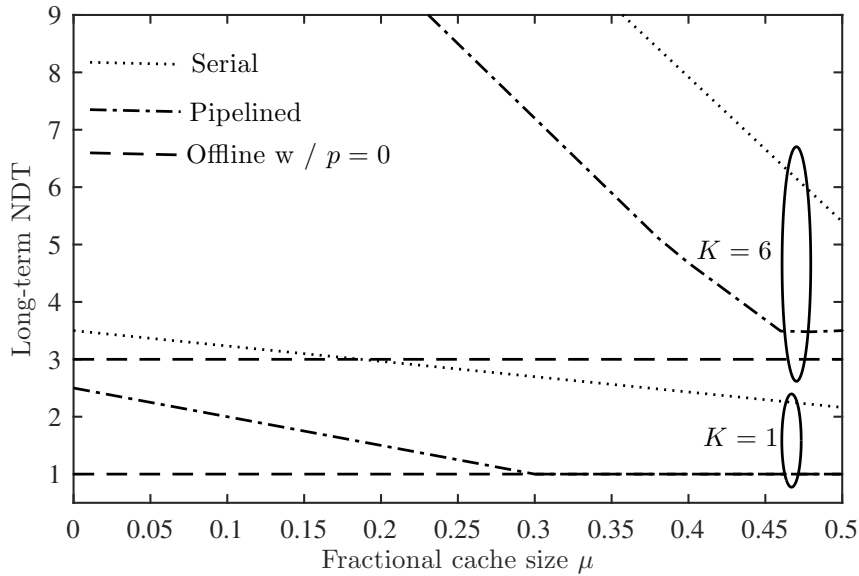}
\caption{Long-term NDT of reactive online caching with known popular set for serial and pipelined transmission, as well as of offline caching ($M=2$, $N=10$, $r=0.2$ and $p=0.8$). }
\vspace{-5.5 mm}
\label{F4}
\end{figure}

\section{Conclusions} \label{Conc}
In this work, we considered the problem of content delivery in a fog architecture with time-varying content popularity. In this setting, online edge caching is instrumental in reducing content delivery latency. For the first time, an information-theoretic analysis is provided to obtain insights into the optimal use of fog resources, namely fronthaul link capacity, cache storage at edge, and wireless bandwidth. The analysis adopts a high-SNR latency metric that captures the performance of online caching design. 
We first studied a serial transmission mode whereby the ENs start transmission on the wireless channel after completion of cloud-to-EN transmission. Then, we investigated a pipelined delivery mode that allows for simultaneous transmission and reception of information by the ENs. In both cases, the impact of knowledge of the set of popular files was considered. 

One of the main insights of the analysis is that, regardless of the cache capacity at the edge, of prior knowledge at the cloud about the time-varying set of popular contents, and of the implementation of serial or pipelined delivery mode, the rate of the fronthaul links sets a fundamental limit on the achievable latency performance. We have specifically shown that the additive performance gap caused by changes in popularity is inversely proportional to the fronthaul rate, given that new content needs to be delivered through the fronthaul links (Proposition \ref{final_uppro1}). This conclusion highlights the increasing relative importance of provisioning sufficient fronthaul capacity as opposed to additional cache storage when the rate of change of the popular files increases. The fronthaul capacity requirements can be reduced if the set of popular files can be estimated (Fig. \ref{F2}). Furthermore, in order to offset the fronthaul overhead when the rate of change of popularity is sufficiently large, it is preferable to cache a smaller fraction of the new files than allowed by the cache capacity (Proposition \ref{known_react} and Fig. \ref{EX}).

Another important design insight is that, by endowing the ENs with the capability to transmit simultaneously on fronthaul and edge channels, one gains the capability to better leverage the available fronthaul resources by means of proactive online caching. In fact, proactive caching can opportunistically exploit unused fronthaul transmission capacity, particularly when the cache capacity is large enough (Fig. \ref{F3} and Fig. \ref{F4}). Moreover, with pipelined transmission, the latency is limited by either edge or fronthaul latencies. As a result, for a system with sufficient fronthaul and cache resources, the performance can match that with a static set of popular files and full caching (Proposition \ref{p_proact} and Fig. \ref{F4}).

Finally, our results emphasize the practical importance of deploying content eviction mechanisms such as LRU to bridge to some extent, the gap between the performance under known and unknown popular set (Fig. \ref{F1}).    

Among directions for future work, we mention here the analysis of more general caching strategies that allow for coding and arbitrary caching patterns across files and ENs; of the impact of imperfect CSI; of partial connectivity (see \cite{chang} and references therein); and of more realistic request models \cite{Leo1}; as well as the investigation of online caching within large fog architectures including metro and core segments \cite{chiang}.

\appendices

\section{Proof of Proposition \ref{known_lem}}\label{known_lem_proof}
The proof follows closely the approach used in \cite[Sec. V-A]{Ramtin}. The goal is to compute the long-term NDT  \eqref{avNDT} using \eqref{k_offcent_time}. We recall that  $R_t \in \{ 0, 1, ...,K \}$ is the number of new files requested by users at time slot $t$ which are not available at ENs' caches. We further define as $X_t \in \{ 0, 1, ...,N \}$ the number of files in $\mathcal{F}_t$ which are available at ENs' caches  at the beginning of time slot $t$, and as $V_t \in \{ 0, 1 \}$ the number of files that were correctly cached at the end of time slot $t$ but are no longer popular at time slot $t+1$. 

Using the above-mentioned random processes, the following update equation for the process $X_t$ holds    
\vspace{-5mm}
\begin{equation}\label{drift2}
X_{t+1}=X_t + R_t-V_t. 
\end{equation}
As in \cite{Ramtin}, it can be shown that  $\{ X_t \}$ is a Markov process with a single ergodic recurrent class consisting the states $\{ K-1, K, ...,N \}$ and transient states  $\{ 0,1,...,K-2 \}$. 
Hence, in the steady state where $\textrm{E}[X_{t+1}]=\textrm{E}[X_t]$, we have the equality 
\vspace{-5mm}
\begin{equation}\label{ffff}
\textrm{E}[R_t]=\textrm{E}[V_t]. 
\end{equation}
Furthermore, since each user requests a file in $\mathcal{F}_t$ according to uniform distribution without replacement, conditioned on $X_t$, the random variable $R_t$ has the expected value
\begin{equation}\label{R_f}
\textrm{E}[R_t|X_t]=K \Big (1-\frac{X_t}{N} \Big).
\end{equation}  
We will use \eqref{ffff} and \eqref{R_f} to compute the expectation $\textrm{E}[R_t]$ in steady state. Given the asymptotic stationarity of $X_t$, and hence of $R_t$, by the standard Cesaro mean argument, the long-term NDT \eqref{k_offcent_time} is finally obtained by substituting in \eqref{avNDT} the steady state mean $\textrm{E}[R_t]$.  

To this end, denoting the number of cached popular files at the end of time slot $t$ as $X'_t$, we have  
\vspace{-5mm}
\begin{equation}\label{R_xx}
X'_t=X_{t+1}+V_t, 
\end{equation}
since the number of cached popular files at the start of time slot $t+1$ is either the same as at the end of time slot $t$ or to has one less file due to arrival of a new file in the popular set. Conditioning on $X'_t$, we have
\vspace{-3mm}
\begin{equation}\label{R_x1}
\text{Pr}(V_t=1|X'_t)=p \frac{X'_t}{N}, 
\end{equation}
since with probability of $p$ there is a new file in the popular set which replaces one of the cached popular files at ENs selected with probability of $X'_t/N$ and these two events are independent of each other.
Taking expectation with respect to  $X'_t$ in \eqref{R_x1} and using the fact that $V_t \in \{0,1\}$, we have 
\vspace{-5mm}
\begin{align}\label{R_x2}
\textrm{E}[V_t]=\textrm{E}[\textrm{Pr}(V_t=1|X'_t)] =p \frac{\textrm{E}[X'_t]}{N} \overset{(a)}{=}p \frac{\textrm{E}[X_{t+1}]+\textrm{E}[V_t]}{N}\overset{(b)}{=}p \frac{\textrm{E}[X_{t}]+\textrm{E}[V_t]}{N}, 
\end{align}
where $(a)$ is obtained using \eqref{R_xx} and $(b)$ is obtained for steady state where $\textrm{E}[X_{t+1}]=\textrm{E}[X_{t}]$. Solving for $E[V_t]$ yields 
\vspace{-3mm}
\begin{equation}\label{RR_x}
\textrm{E}[V_t]=\frac{p}{1-p/N}\frac{\textrm{E}[X_{t}]}{N}.  
\end{equation}
Taking expectation respect to $X_t$ from \eqref{R_f} and then using \eqref{RR_x} and \eqref{ffff},  we have
\begin{equation}\label{R_b}
\textrm{E}[R_t]= K \Big (1-\frac{\textrm{E}[X_{t}]}{N} \Big )= \frac{Kp}{K(1-p/N)+p}.
\end{equation}
Plugging  \eqref{R_b} into \eqref{k_offcent_time} completes the proof. 

\section{proof of proposition \ref{known_react}}\label{pknown_react}
The proposed adaptive caching scheme optimizes the choice of the cached fraction within the reactive scheme achieving $\bar{\delta}_{\text{react,k}}(\mu,r)$ in \eqref{react_kkk}. To this end, it solves the problem  $\bar{\delta}_{\text{react,adapt,k}}(\mu,r)=\underset{ \mu' \leq \mu }{\min} ~ \bar{\delta}_{\text{react,k}}(\mu',r)  $ over the fraction $\mu'$. With some algebra, this optimization yields 
\vspace{-5mm}
\begin{align}
p_0(\mu,r)&= \min \Bigg (\frac{Kr \big ( \min(M,K)-1 \big ) }{K(M-1)+r \big (\min(M,K)-1 \big ) \big (\frac{K}{N}-1 \big )},1 \Bigg ) \nonumber \\
p_1(\mu,r)&=\min \Bigg ( \frac{ K - r \big ( \min(M,K)-1 \big ) }{1+ \Big ( K - r \big ( \min(M,K)-1 \big ) \Big ) \big ( 1/N-1/K\big ) },1 \Bigg)
\end{align}
for $r \leq r_{th}$; and
\vspace{-5mm} 
\begin{align}\label{cons_r_2}
p_0(\mu,r)= p_1(\mu,r)= \min \Bigg (\frac{K}{M+K/N-1},1 \Bigg ),   
\end{align}
for  $r \geq r_{th}$.


\section{proof of proposition \ref{final_uppro1}}\label{fund_r_pro}
To prove Proposition \ref{final_uppro1}, we will show that for serial transmission the following inequalities 
\begin{align}\label{longpro2}
 \frac{1-\frac{Kp}{N}}{2}\delta^*_{\text{off}} (\mu,r) + \frac{Kp}{N}\Big (1+\frac{\mu}{r} \Big) & \leq  \bar{\delta}_{\text{on,k}}^*(\mu,r) \leq \bar{\delta}_{\text{on,u}}^*(\mu,r) \leq  2 \delta^*_{\text{off}} (\mu,r) +  \frac{4}{r},
\end{align}
hold. Comparing the right-most and the left-most expressions will complete the proof. For the pipelined case, we have the following relationships with the minimum NDT for the serial case: 
\begin{align}\label{longpro3}
\bar{\delta}_{\text{on,u}}^{pl^*}(\mu,r) & \leq \bar{\delta}_{\text{on,u}}^*(\mu,r) \overset{(a)}{\leq}  2 \delta^*_{\text{off}} (\mu,r) +  \frac{4}{r} \overset{(b)}{\leq}  4 \delta^{pl^*}_{\text{off}} (\mu,r) +  \frac{4}{r}, 
\end{align}
where $(a)$ is obtained by using the right-most inequality in \eqref{longpro2}, and $(b)$ is obtained from \cite[Lemma 4]{Avik2}, namely from the inequality $\delta_{\text{off}}^{*}(\mu,r) \leq 2\delta_{\text{off}}^{pl^*}(\mu,r)$; and also
\begin{align}\label{longpro4}
\bar{\delta}_{\text{on,u}}^{pl^*}(\mu,r) & \geq  \bar{\delta}_{\text{on,k}}^{pl^*}(\mu,r) \overset{(a)}{\geq} \frac{1}{2} \bar{\delta}_{\text{on,k}}^*(\mu,r) \overset{(b)}{\geq}  \frac{1-\frac{Kp}{N}}{4}\delta^*_{\text{off}} (\mu,r) + \frac{Kp}{2N} \Big (1+\frac{\mu}{r} \Big) , 
\end{align}
where $(a)$ is obtained using Lemma \ref{POrderOptimalityOFFMIN} and $(b)$ is the first inequality in \eqref{longpro2}.

In what follows, we prove \eqref{longpro2} for the serial transmission. 

\textit{Lower bound}:
To prove the lower bound in \eqref{longpro2}, we first present the following lemma, which presents a slight improvement over the lower bound in \cite[Proposition 1]{Avik2}. 
\noindent
\begin{lemma}\label{lboffan}
\noindent (Lower Bound on Minimum offline NDT). For an $M \times K$ F-RAN with $N  \geq K$ files, the minimum NDT is lower bounded as
\begin{equation}
\begin{aligned}\label{off-lb}
\delta_{\emph{off}}^*(\mu,r) \geq \delta_{\emph{off,lb}}(\mu,r)
\end{aligned}
\end{equation}
where $\delta_{\emph{off,lb}}(\mu,r)$ is the minimum value of the following linear program (LP)
\begin{align}
\hspace{-10mm}  \text{minimize}~~& \delta_E+\delta_F \label{LP1} \\
\hspace{-10mm}\text{subject to :}~&l\delta_E+(M-l)r\delta_F \geq K-\min \big ((K-l),(M-l)(K-l)\mu \big ) \label{LP2} \\
\hspace{-10mm}&  \delta_F\geq 0, \delta_E\geq 1, \label{LP3}
\end{align}
where \eqref{LP2} is a family of constraints with $0 \leq l \leq \min \{ M,K\}$.
\end{lemma}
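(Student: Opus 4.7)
The plan is to prove the lemma via a cut-set/genie argument at high SNR, in the same spirit as \cite[Prop.~1]{Avik2} but with one refinement that yields the $\min$ expression in \eqref{LP2}. I will show that any feasible offline policy must satisfy the constraints \eqref{LP2}--\eqref{LP3} for every integer $l \in [0,\min\{M,K\}]$; the lower bound on $\delta^*_{\text{off}}(\mu,r)$ then follows by minimizing over the resulting LP. The $\delta_F \geq 0$ bound is immediate from the definition \eqref{FNDT}, and $\delta_E \geq 1$ follows from the observation in Sec.~\ref{NDT_S} that the NDT is normalized against an interference-free single-user reference channel whose capacity is $\log P + o(\log P)$, so that no scheme can deliver $L$ bits over the edge in fewer than $L/\log P$ symbols in the high-SNR limit.

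For the main family of constraints, I would fix a feasible policy, an index $l$, and a demand vector in which the $K$ users request $K$ distinct files (possible since $N\geq K$). Choose any subset $\mathcal{M}$ of $M-l$ ENs, and any size-$(K-l)$ subset $\mathcal{K}$ of the requested files. Every decoder is handed a genie consisting of (i) the fronthaul messages $\{U_m\}_{m\in\mathcal{M}}$, whose total entropy is at most $(M-l)T_F C_F = (M-l)r T_F \log P$ by the fronthaul capacity constraint; and (ii) the per-file cache sub-contents $\{S_m^f:\, m\in\mathcal{M},\, f\in\mathcal{K}\}$, whose total entropy, by the intra-file caching bound $H(S_m^f)\le \mu L$ assumed in Sec.~\ref{sysmo1}, is at most $(M-l)(K-l)\mu L$. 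The key refinement is that each of the $K-l$ targeted files contains only $L$ bits, so the cache genie can convey no more than $\min\bigl((K-l),(M-l)(K-l)\mu\bigr)L$ bits of useful information about the $K-l$ targeted files; this tightening of the naive $(M-l)(K-l)\mu L$ bound is exactly what produces the $\min$ in \eqref{LP2} and the claimed improvement over \cite[Prop.~1]{Avik2}.

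Conditioned on this genie, the only remaining source of information at the users about the $K$ requested files is the edge transmission from the $l$ ENs outside $\mathcal{M}$. The MIMO capacity of the associated $l$-input broadcast channel at SNR $P$ is $l\log P + o(\log P)$, so this channel can convey at most $l T_E \log P + o(\log P)$ bits in the $T_E$ symbols of the edge phase. Reliable decoding of all $K$ files (a total of $KL$ bits) combined with Fano's inequality then yields
$$
KL \;\le\; \min\bigl((K-l),(M-l)(K-l)\mu\bigr)\,L \;+\; (M-l)\,r\,T_F \log P \;+\; l\,T_E \log P \;+\; o(L\log P).
$$
Dividing by $L$, letting $L\to\infty$, normalizing by $1/\log P$ and letting $P\to\infty$ with $C_F = r\log P$, I recover \eqref{LP2} precisely. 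Since the argument is uniform over the choice of $l$, $\mathcal{M}$, and $\mathcal{K}$, the pair $(\delta_E,\delta_F)$ of any feasible scheme must lie in the polytope \eqref{LP2}--\eqref{LP3}, hence $\delta^*_{\text{off}}(\mu,r) \ge \delta_{\text{off,lb}}(\mu,r)$.

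The main obstacle is justifying the refined cache genie bound $\min\bigl((K-l),(M-l)(K-l)\mu\bigr)L$: it requires combining the per-subcache entropy constraint $H(S_m^f)\le \mu L$ with the pigeonhole-type observation that any single file has total information content at most $L$, and propagating this cap jointly across the $K-l$ files so that the usable side information is bounded by $(K-l)L$ whenever $(M-l)\mu \ge 1$. Everything else reduces to standard high-SNR Fano/cut-set manipulations, and the linearity of the resulting inequalities in $(\delta_E,\delta_F)$ delivers the LP in one step.
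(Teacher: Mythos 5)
Your argument is correct and follows essentially the same route as the paper, which simply invokes the cut-set proof of \cite[Proposition 1]{Avik2} (via the steps of Proposition \ref{onlb}) with the identical one-line refinement: bounding the conditional entropy of the $M-l$ genie-provided caches by $\min\big((K-l),(M-l)(K-l)\mu\big)L$, since given $F_{[1:l]}$ and the unrequested files the caches are a function of only the remaining $K-l$ requested files of total size $(K-l)L$. The only cosmetic difference is that you phrase the wireless bottleneck as a transmit-side cut over the $l$ ENs outside $\mathcal{M}$, whereas \cite{Avik2} uses the received signals of $l$ users; both yield the same $l\,T_E\log P$ term and hence the same LP constraint.
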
 
\begin{proof}
It follows using the same steps as Proposition \ref{onlb} below.  
\end{proof}
Next, we introduce the following lower bound on the minimum long-term NDT of online caching. 
\noindent
\begin{prop}\label{onlb}
\noindent (Lower bound on the Long-Term NDT of Online Caching for Serial Transmission). For an $M \times K$ F-RAN with a fronthaul rate of $r \geq 0$, the long-term NDT is lower bounded as $\bar{\delta}_{\emph{on,u}}^*(\mu,r) \geq \bar{\delta}_{\emph{on,k}}^*(\mu,r)  \geq (1-Kp/N)\delta_{\emph{off,lb}}(\mu,r) + (Kp/N)\delta_{\emph{on,lb}}(\mu,r)$, where $\delta_{\emph{on,lb}}(\mu,r)$ is the solution of following LP
\begin{align}
\text{minimize}~~& \delta_E+\delta_F \label{LP4} \\
\text{subject to :}~&l\delta_E+(M-l)r\delta_F \geq K-\min \Big ((K-l-1), (M-l)(K-l-1)\mu \Big ) \label{LP5}\\
&  \delta_F\geq 0, \delta_E\geq 1, \label{LP6}
\end{align}
where \eqref{LP5} is a family of constraints with $0 \leq l \leq K-1$ and $\delta_{\emph{off,lb}}(\mu,r)$ is the lower bound on the minimum NDT of offline caching defined in Lemma \ref{lboffan}.
\end{prop}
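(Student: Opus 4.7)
\textbf{Proof plan for Proposition \ref{onlb}.} The inequality $\bar{\delta}_{\text{on,u}}^*(\mu,r) \geq \bar{\delta}_{\text{on,k}}^*(\mu,r)$ is immediate from the definitions, since any feasible policy under the unknown popular set assumption is a fortiori feasible when $\mathcal{F}_t$ is revealed to the cloud. The remaining work is to lower bound $\bar{\delta}_{\text{on,k}}^*(\mu,r)$, and the plan is to adapt the cut-set argument behind Lemma \ref{lboffan} (see also \cite[Proposition 1]{Avik2}) to the online setting by splitting each time slot $t$ according to whether any user's demand coincides with a \emph{fresh} file, i.e., a file in $\mathcal{F}_t\setminus\mathcal{F}_{t-1}$. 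Under the Markov popularity model, a new file arrives with probability $p$, and each of the $K$ distinct requests then hits it with probability $1/N$; hence, in steady state, the event $\mathcal{A}_t$ that some request is fresh has probability $\Pr(\mathcal{A}_t) = Kp/N$.

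On the complementary event $\mathcal{A}_t^c$, no requested file is fresh, so the classical offline cut-set argument applies verbatim: fixing any integer $l\in[0,\min\{M,K\}]$, selecting $l$ users and the complementary subset of $M-l$ ENs, and applying Fano-type inequalities to the entropies of the $l$ edge observations, the fronthaul messages at the excluded $M-l$ ENs, and the cache contents at these ENs (capped by $(M-l)(K-l)\mu L$ bits, using the per-file $\mu L$ constraint in Sec.~\ref{sysmo1}), one obtains exactly the family \eqref{LP2}, whose LP minimum is $\delta_{\text{off,lb}}(\mu,r)$. On the event $\mathcal{A}_t$, the essential observation is that any fresh file $f\in\mathcal{F}_t\setminus\mathcal{F}_{t-1}$ cannot appear in any cache content $\{S_{m,\tau}\}_{\tau<t}$: past cache updates are deterministic functions of past popular sets, past demands, and past fronthaul messages, none of which can convey any information about a file that only materializes at time $t$. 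Hence, in the cut-set bound applied at slot $t$, the cached-bits contribution is reduced from $(M-l)(K-l)\mu L$ to at most $(M-l)(K-l-1)\mu L$, producing the tighter family \eqref{LP5} (with the natural range $0\le l\le K-1$), whose LP minimum is $\delta_{\text{on,lb}}(\mu,r)$. Averaging the resulting per-slot NDT lower bound over the two cases and passing to the $\limsup$ in \eqref{avNDT} then yields the claimed convex combination.

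The main obstacle is making the cut-set entropy accounting rigorous in the time-varying setting, since each $S_{m,t-1}$ depends on the entire history $(\mathcal{F}_{1:t-1},d_{1:t-1},H_{1:t-1},\{U_{m,1:t-1}\})$; the argument must condition on the filtration generated by this history and verify that the freshness of one of the demanded files strictly reduces the usable cached information by a full file's worth, uniformly over any admissible caching policy (not just the fractional intra-file coded policies of Sec.~\ref{sysmo1}). A secondary issue is the Markov transient: the $Kp/N$ figure is a steady-state probability, while the chain $\{\mathcal{F}_t\}$ starts from an arbitrary initial distribution, but this is absorbed by the $\limsup$ in \eqref{avNDT}, which discards any finite-length transient. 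Once the per-slot bound is averaged with weights $(1-Kp/N)$ and $Kp/N$, the advertised inequality follows.
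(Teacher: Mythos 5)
Your proposal is correct and follows essentially the same route as the paper: decompose each slot according to whether a freshly arrived file is requested (an event of probability $Kp/N$), apply the offline cut-set bound $\delta_{\text{off,lb}}$ on the complementary event, and on the fresh event tighten the entropy bound on the excluded ENs' caches from $(K-l)$ to $(K-l-1)$ files' worth because the new file cannot have been placed by any past caching decision, which is exactly the revision of step (67c) of the offline converse that yields \eqref{LP5}. The only cosmetic difference is that the paper packages the argument as a genie-aided system in which the ENs receive optimal offline cache contents for free each slot, whereas you argue directly on the online system; both are valid and the paper's treatment is no more rigorous on the conditioning/transient points you flag.
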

\begin{proof}
See Appendix \ref{lowbd}.
\end{proof}
Now, using Proposition \ref{onlb}, we have
\begin{equation}\label{lbon1}
\begin{aligned}
\bar{\delta}_{\text{on,u}}^*(\mu,r) & \geq \bar{\delta}_{\text{on,k}}^*(\mu,r)  \geq \Big (1-\frac{Kp}{N} \Big )\delta_{\text{off,lb}} (\mu,r) + \frac{Kp}{N}\delta_{\text{on,lb}} (\mu,r)  \overset{(a)}{\geq} \frac{(1-\frac{Kp}{N})}{2}\delta^*_{\text{off}} (\mu,r) + \frac{Kp}{N}\delta_{\text{on,lb}} (\mu,r)  \nonumber \\  
&\overset{(b)}{\geq} \frac{(1-\frac{Kp}{N})}{2}\delta^*_{\text{off}} (\mu,r) + \frac{Kp}{N} \Big ( 1+\frac{\min (\mu,1/M)}{r} \Big ), 
\end{aligned}
\end{equation}
where $(a)$ is obtained using \eqref{achtomin}, namely $\delta^*_{\text{off}} (\mu,r) / \delta_{\text{off,lb}}(\mu,r) \leq 2$ and $(b)$ follows by deriving the lower bound $\delta_F \geq \min (\mu,1/M)/r $ on the optimal solution of the LP \eqref{LP4} by setting $l=0$ in the constraint \eqref{LP5} and summing the result with constraint \eqref{LP6}. 

\textit{Upper bound}: 
To prove the upper bound in \eqref{longpro2}, we leverage the following lemma.
\begin{lemma}\label{up_alpha1}
\noindent For any $ \alpha > 1$, we have the following inequality
\begin{equation}
\begin{aligned}\label{Ach1}
\delta_{\emph{off,ach}}\Big (\frac{\mu}{\alpha},r \Big ) \leq 2\delta_{\emph{off}}^*(\mu,r) + \frac{1}{r} + \frac{1}{\alpha} \Big (1-\frac{1}{r} \Big).
\end{aligned}
\end{equation}
\end{lemma}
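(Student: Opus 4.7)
My plan is to construct an explicit achievable scheme for cache fraction $\mu/\alpha$ via file splitting, and then compare the resulting NDT with the claimed bound using the order-optimality result \eqref{achtomin}. Specifically, each file is partitioned into two sub-files of sizes $(1/\alpha)L$ and $(1-1/\alpha)L$ bits. On the first sub-file one runs the offline caching-and-delivery policy from Sec. \ref{OFF} with cache fraction $\mu$; since only a $(1/\alpha)$-portion of each file is treated this way, the total storage used per EN is $(1/\alpha)\mu NL=(\mu/\alpha)NL$ bits, exactly matching the cache budget. The remaining $(1-1/\alpha)L$-bit sub-file is delivered via pure C-RAN, using no cache. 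By the time-sharing rule recalled around \eqref{achtomin}, the achievable NDT at cache $\mu/\alpha$ is therefore upper bounded by $(1/\alpha)\,\delta_{\text{off,ach}}(\mu,r)+(1-1/\alpha)\,\delta_{\text{off,ach}}(0,r)$.

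Next, I would bound the two summands separately. The factor-of-two result \eqref{achtomin} gives $\delta_{\text{off,ach}}(\mu,r)\le 2\delta_{\text{off}}^{*}(\mu,r)$, while \eqref{cran_e} gives $\delta_{\text{off,ach}}(0,r)=K/\min(M,K)+K/(Mr)$. Since the ambient use of Lemma \ref{up_alpha1} in Proposition \ref{final_uppro1} assumes $M\ge K$, one has $K/\min(M,K)=1$ and $K/(Mr)\le 1/r$, so $\delta_{\text{off,ach}}(0,r)\le 1+1/r$. Combining these yields the intermediate bound $\delta_{\text{off,ach}}(\mu/\alpha,r)\le (2/\alpha)\,\delta_{\text{off}}^{*}(\mu,r)+(1-1/\alpha)(1+1/r)$.

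Finally, it remains to show that this intermediate expression is itself upper bounded by the target $2\delta_{\text{off}}^{*}(\mu,r)+1/r+(1/\alpha)(1-1/r)$. After expansion, the difference target-minus-intermediate simplifies to $2\delta_{\text{off}}^{*}(\mu,r)(1-1/\alpha)-(1-2/\alpha)$, which is non-negative whenever $\delta_{\text{off}}^{*}(\mu,r)\ge 1$. The latter is immediate from the constraint $\delta_E\ge 1$ appearing in the LP of Lemma \ref{lboffan}, together with $\delta_F\ge 0$. Chaining the inequalities then delivers the claimed bound.

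The main obstacle, albeit mild, is this last bookkeeping step: the natural memory-sharing bound carries a $2/\alpha$ prefactor on $\delta_{\text{off}}^{*}(\mu,r)$ and a $(1-1/\alpha)$ weight on the C-RAN term $1+1/r$, whereas the target has an unweighted $2\delta_{\text{off}}^{*}(\mu,r)$ and an additive $1/r$ plus a convex-combination-like correction $(1/\alpha)(1-1/r)$. Recognizing that the target can be matched only after invoking $\delta_{\text{off}}^{*}(\mu,r)\ge 1$ (rather than some cleaner identity) is the only non-routine element; the rest is a direct application of file splitting and the factor-of-two gap.
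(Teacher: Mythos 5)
Your argument is essentially sound and takes a genuinely different route from the paper, but it has one step that is not justified as written. The quantity $\delta_{\text{off,ach}}(\mu/\alpha,r)$ in the lemma is, by definition, the NDT of the \emph{specific} offline policy of Sec.~\ref{OFF} run with fractional cache $\mu/\alpha$ --- it is not an infimum over all achievable schemes. Your file-splitting construction is a \emph{different} scheme with the same cache budget, and the time-sharing rule only tells you that \emph{that scheme} achieves $(1/\alpha)\delta_{\text{off,ach}}(\mu,r)+(1-1/\alpha)\delta_{\text{off,ach}}(0,r)$; it does not by itself imply $\delta_{\text{off,ach}}(\mu/\alpha,r)\le (1/\alpha)\delta_{\text{off,ach}}(\mu,r)+(1-1/\alpha)\delta_{\text{off,ach}}(0,r)$, which is what you need (and what Proposition~\ref{fup1} consumes downstream). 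The missing ingredient is the convexity of $\mu\mapsto\delta_{\text{off,ach}}(\mu,r)$: the function is piecewise linear with a single breakpoint at $\mu=1/M$ in the low-fronthaul regime, and one must check that the slope on $[0,1/M]$, namely $(K-1)-K/r$ (for $M\ge K$), does not exceed the slope $-(K-1)/(M-1)$ on $[1/M,1]$; this holds precisely because $r\le r_{th}=K(M-1)/(M(K-1))$ in that regime, while for $r\ge r_{th}$ the function is linear. So the inequality is true, but for a reason tied to the exact definition of $r_{th}$, and it must be stated and verified. You should also make explicit that you are assuming $M\ge K$ (used to get $\delta_{\text{off,ach}}(0,r)\le 1+1/r$); this is consistent with the paper, whose own proof of the lemma also invokes $M\ge K$ and only uses the lemma under the hypothesis of Proposition~\ref{final_uppro1}. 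The remaining algebra is correct: the deficit $2\delta^*_{\text{off}}(\mu,r)(1-1/\alpha)-(1-2/\alpha)$ is nonnegative since $\delta^*_{\text{off}}(\mu,r)\ge 1$ by the constraints \eqref{LP3}.

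For comparison, the paper proves the lemma by splitting into three cache regimes ($\mu\in[0,1/M]$, $[1/M,\alpha/M]$, $[\alpha/M,1]$), evaluating $\delta_{\text{off,ach}}(\mu/\alpha,r)$ explicitly from the policy description in each regime, and comparing against regime-specific lower bounds on $\delta^*_{\text{off}}(\mu,r)$ from the LP of Lemma~\ref{lboffan} (the cut-set bound $1+K(1-\mu M)/(Mr)$ for small caches, the trivial bound $1$ otherwise). Your memory-sharing argument, once the convexity step is supplied, is shorter and makes the structure of the additive gap more transparent (a $(1-1/\alpha)$-weighted C-RAN overhead plus the factor-of-two optimality gap); the paper's case analysis is longer but avoids any appeal to convexity of the achievable-NDT function.
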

\begin{proof}
See Appendix \ref{prolemupa}.
\end{proof}
Using Proposition \ref{fup1} and Lemma \ref{up_alpha1}, an upper bound on the long-term average NDT of the proposed reactive caching scheme is obtained as
\begin{equation}
\begin{aligned}\label{up_a}
\bar{\delta}_{\text{react}}(\mu,r) \leq 2\delta_{\text{off}}^*(\mu,r) + f(\alpha),
\end{aligned}
\end{equation}
where
\begin{equation}
\begin{aligned}\label{Alp}
f(\alpha)=\frac{1}{r} + \frac{1}{\alpha} \Big (1-\frac{1}{r} \Big)+\frac{Np(\mu/r)}{(N-p)(\alpha-1)}.
\end{aligned}
\end{equation}
Since the additive gap \eqref{Alp} is a decreasing function of $N$ and an increasing function of $p$ and $\mu$,  it can be further upper bounded by setting $N=2$, $p=1$ and $\mu=1$. Finally, by plugging $\alpha=2$, and using the inequality $\bar{\delta}_{\text{on,u}}^*(\mu,r) \leq \bar{\delta}_{\text{react}}(\mu,r) $ the upper bound  is proved.

\section{proof of proposition \ref{onlb}}\label{lowbd}
To obtain a lower bound on the long-term NDT, we consider a genie-aided system in which, at each time slot $t$, the ENs are provided  with the optimal cache contents of an offline scheme tailored to the current popular set $\mathcal{F}_t$ at no cost in terms of fronthaul latency. In this system, as in the system under study, at each time slot $t$, with probability of $p$ there is a new file in the set of popular files, and hence the probability that an uncached file is requested by one of the users is $Kp/N$. As a result, the NDT in time slot $t$ for the genie-aided system can be lower bounded as
\begin{equation}\label{insNDT}
\delta_t(\mu,r) \geq (1-Kp/N)\delta_{\text{off,lb}}(\mu,r)+(Kp/N)\delta_{\text{on,lb}}(\mu,r),
\end{equation}
where $\delta_{\text{off,lb}}(\mu,r)$ is the lower bound on the minimum NDT for offline caching in Lemma \ref{lboffan}, while $\delta_{\text{on,lb}}(\mu,r)$ is a lower bound on the minimum NDT for offline caching in which all files but one can be cached. The lower bound \eqref{insNDT} follows since, in the genie-aided system, with probability $1-Kp/N$ the system is equivalent to the offline caching set-up studied in \cite{Avik2}, while, with probability of $Kp/N$, there is one file that cannot be present in the caches.

To obtain the lower bound $\delta_{\text{on,lb}}(\mu,r)$, we note that the set-up is equivalent to that in \cite{Avik2} with the only difference is that one of the requested files by users is no longer partially cached at ENs. Without loss of generality, we assume that file $F_K$ is requested but it is not partially cached. Revising step (67c) in \cite{Avik2}, we can write
\begin{align}\label{upcut1}
 \textrm{H}(S_{[1:(M-l)]}|F_{[1:l]},F_{[K+1:N]}) & \leq \min \Big ((M-l)(K-l-1)\mu, K-l-1 \Big )L,
\end{align}
which is obtained by using the fact that the constrained entropy of the cached content cannot be larger than the overall size of files $F_j$ with $j \in [l+1,K-1]$. Plugging \eqref{upcut1} into \cite[Eq. (66)]{Avik2} and then taking the limit $L \rightarrow \infty$ and $P \rightarrow \infty$, results in \eqref{LP5}. The rest of proof is as in \cite[Appendix I]{Avik2}.
Using \eqref{insNDT} in the definition of long-term average NDT \eqref{avNDT} concludes the proof.

\section{proof of lemma \ref{up_alpha1}}\label{prolemupa}
\noindent To prove Lemma \ref{up_alpha1}, for any given $\alpha > 1$ and $M \geq 2$, we consider separately small cache regime with $\mu \in [0,1/M]$; intermediate cache regime with $\mu \in [1/M,\alpha/M]$ and the high cache regime with $\mu \in [\alpha/M,1]$.

\noindent $\bullet$ \textbf{Small-cache Regime} ($\mathbf{\mu}$ $\in [0,1/M]$):
Using Lemma \ref{lboffan} a lower bound on the minimum NDT can be obtained as 
\begin{equation}
\begin{aligned}\label{minNDTLP2}
\delta^{*}_{\text{off}} (\mu,r) \geq 1+ \frac{K(1-\mu M)}{Mr}
\end{aligned}
\end{equation}
by considering the constraint \eqref{LP2} with $l=0$ and constraint \eqref{LP3}. Using the offline caching and delivery policy in Sec. \ref{OFF} shown in the top left of Fig. \ref{seroff}, the NDT in the regime of interest is 
\begin{align}\label{lowfront2}
\delta_{\text{off,ach}} \Big (\frac{\mu}{\alpha},r \Big ) &= \Big (\frac{\mu M}{\alpha} \Big ) \delta_{\text{E,Coor}} + \Big (1-\frac{\mu M}{\alpha} \Big ) \big [\delta_{\text{E,C-RAN}} + \delta_{\text{F,C-RAN}} \big ] \nonumber \\ 
& \overset{(a)}{\leq} \frac{(M+K-1)\mu}{\alpha} + \Big ( 1+ \frac{K}{Mr} \Big ) \times \Big (1- \frac{\mu M}{\alpha} \Big ) =1+\frac{(K-1)\mu}{\alpha} + \frac{K}{Mr}\Big (1- \frac{\mu M}{\alpha} \Big ),
\end{align}
where $(a)$ is obtained using \eqref{coor_e} and \eqref{cran_e}.  
From \eqref{minNDTLP2} and \eqref{lowfront2}, we have
\begin{align}\label{UPRatio3}
\delta_{\text{off,ach}} \Big (\frac{\mu}{\alpha},r \Big ) - 2\delta^{*}_{\text{off}} (\mu,r)  
& \overset{(a)}{\leq} \frac{K \mu}{r} \Big (2-\frac{1}{\alpha}\Big)+ \frac{(K-1)\mu}{\alpha}-\frac{K}{Mr}  \overset{(b)}{\leq} \frac{K}{Mr} \Big (1-\frac{1}{\alpha}\Big)+ \frac{(K-1)}{\alpha M}  \nonumber \\  
&\overset{(c)}{\leq} \frac{1}{r} + \frac{1}{\alpha} \Big (1-\frac{1}{r} \Big), 
\end{align}
where $(a)$ is obtained by omitting the first negative term; $(b)$ is obtained by using the fact that $\mu \leq 1/M$ and $(c)$ follows from $M \geq K$.

\noindent $\bullet$ \textbf{ Intermediate cache Regime} ($\mathbf{\mu}$ $\in [1/M,\alpha/M]$):
Using Lemma \ref{lboffan} a lower bound on the minimum NDT can be obtained as $\delta^{*}_{\text{off}} (\mu,r) \geq 1$
by considering the constraint \eqref{LP3}. Using this lower bound and the offline caching and delivery policy in Sec. \ref{OFF} shown in the bottom of Fig. \ref{seroff}, we have
\begin{align}\label{MUPRatio3}
\delta_{\text{off,ach}} \Big (\frac{\mu}{\alpha},r \Big ) - 2\delta^{*}_{\text{off}} (\mu,r) & \leq \Big (\frac{\mu}{\alpha} \Big ) \delta_{\text{E,Coop}} -2 +\Big (1-\frac{\mu}{\alpha} \Big ) [\delta_{\text{E,C-RAN}}+\delta_{\text{F,C-RAN}}]  \overset{(a)}{\leq} \frac{1}{r} 
\end{align}
where $(a)$ is obtained using \eqref{coop_e} and \eqref{cran_e} and also the fact that $M \geq K$.

\noindent $\bullet$  \textbf{Large-cache regime ($\mu \in [\alpha/M,1]$)}:
  In this regime, we have 
\begin{align}\label{UPRatio4}
\dfrac{ \delta_{\text{off,ach}} \big (\mu /\alpha,r \big )}{\delta_{\text{off}}^*(\mu,r)} & \overset{(a)}{\leq} \delta_{\text{off,ach}} \big (\mu /\alpha,r \big )  \overset{(b)}{=} \Big (\frac{\mu M/\alpha-1}{M-1} \Big ) \delta_{\text{E,Coop}}+\Big (\frac{M (1-\mu/\alpha)}{M-1} \Big ) \delta_{\text{E,Coor}} \nonumber \\
&  \overset{(c)}{\leq} \delta_{\text{E,Coor}}  \overset{(d)}{\leq} \frac{M+K-1}{M}   \overset{(e)}{\leq} 2  
\end{align}
where $(a)$ is obtained using the fact that the lower bound $\delta^{*}_{\text{off}} (\mu,r) \geq 1$; $(b)$ is obtained using the offline caching and delivery policy in Sec. \ref{OFF} shown in the top right of Fig. \ref{seroff}; $(c)$ is obtained using the fact that NDT is a decreasing function of $\mu$ and it is maximized by setting $\mu=\alpha/M$; $(d)$ is obtained using \eqref{coor_e} and $(e)$ is obtained using $M \geq K$.
Finally, using \eqref{UPRatio3}-\eqref{UPRatio4} concludes the proof.

\ifCLASSOPTIONcaptionsoff
  \newpage
\fi



%
\bibliographystyle{IEEEtran}
\bibliography{IEEEabrv,IEEEexample}

%




\end{document}